\newtheorem{observation}[theorem]{Observation}
\spnewtheorem{sublemma}[theorem]{Sublemma}{\bfseries}{\itshape}
\newcommand{\flow}{{\sf excess}}
\newcommand{\incomp}{\perp}
\newcommand{\CC}{V}
\newcommand{\DC}{\mathrm{DC}}
\colorlet{LightViolet}{violet!40}
\colorlet{LightRed}{red!40}
\colorlet{LightOrange}{orange!40}
\colorlet{LightGreen}{green!40}
\colorlet{LightBlue}{blue!40}
\colorlet{DarkGreen}{green!50!black}
\colorlet{DarkRed}{red!70!black}
\colorlet{DarkCyan}{red!70!black}
\colorlet{DarkBlue}{blue!80!black}
\definecolor{DarkOrange}{rgb}{1.0, 0.49, 0.0}
\definecolor{Airforceblue}{rgb}{0.36, 0.54, 0.66}
\newcommand{\nop}[1]{}
\begin{document}

\title{Optimizing Entropic and Polymatroid Functions Subject to Difference Constraints\thanks{S. Im was supported in part by NSF grants CCF-1617653, CCF-1844939 and CCF-2121745. B. Moseley was supported in part by by NSF grants CCF-1824303,  CCF-1845146, CCF-2121744 and CMMI-1938909, a Google Research Award, an Infor Research Award, and a Carnegie Bosch Junior Faculty Chair. K. Pruhs was supported in part  by NSF grants  CCF-1907673,  CCF-2036077, CCF-2209654 and an IBM Faculty Award. A. Samadian contributed to this work while he was a PhD student at the University of Pittsburgh and is now affiliated with Google Pittsburgh.}}
\titlerunning{Optimizing Polymatroid Functions}
%
\author{Sungjin Im\inst{1} \and
Benjamin Moseley\inst{2} \and
Hung Q. Ngo\inst{3} \and
Kirk Pruhs\inst{4}
\and
Alireza Samadian\inst{4}}
\authorrunning{S. Im et. al.}
%
\institute{University of California, Merced CA, 95343 \and
Tepper School of Business, Carnegie Mellon University, Pittsburgh PA, 15213
\and
RelationalAI, Inc., Berkeley, CA, 94704.
\and
Computer Science Department, University of Pittsburgh, Pittsburgh PA, 15260
}
\maketitle              
\begin{abstract}
We consider a class of optimization problems
that involve determining the maximum value that
a function in a particular
class  can attain subject to
a collection of difference constraints.
We show that a particular linear programming technique, based on duality and projections,
can be used to rederive some structural results that were previously established using
more ad hoc methods. We then show that this technique can be used to obtain a polynomial-time
algorithm for a certain type of simple difference constraints.
Finally we give lower bound results that show that certain possible extensions of these results
are probably not feasible.

\keywords{Submodular optimization  \and cardinality estimation \and entropic implication}
\end{abstract}

\section{Introduction}

We consider a class of optimization problems
that involve determining the maximum value that
a function in a particular
class $\mathcal C$ can attain subject to
a collection of difference constraints.
 So written as a mathematical program, these problems are of the form $DC[{\mathcal C}]$:
 \begin{equation}
\begin{array}{rrclcl}
\displaystyle \max & \multicolumn{3}{l}{h([n]) - h( \emptyset)} \\
\textrm{s.t.} & h(Y_i ) - h( X_i) & \leq & c_i  \qquad i \in [k] \\
&\displaystyle h & \in & \mathcal C
\end{array}  \nonumber
\end{equation}
where each $X_i \subsetneq  Y_i \subseteq [n]$,
and the ``variable'' is the function (or vector) $h:2^n \rightarrow \Re^+$.

We are primarily interested in two classes $\mathcal C$ of functions.
The first is the class
$\Gamma_n^*$ of entropic functions.
If $h$ is an entropic function
then  the difference $ h(Y_i) - h(X_i)$ is equal to $h(Y_i \mid X_i)$,
the conditional entropy of $Y_i$ conditioned on the knowledge
of $X_i$.
Thus $DC[\Gamma^*]$ is the problem of determining the  maximum possible
entropy subject to  conditional entropy constraints.
The second is the class $\Gamma_n$ of polymatroid functions.
Other classes of functions that will play a role in our story are
 the class $N_n$ of weighted coverage  functions (normal functions in the database literature), and the class
$M_n$ of modular functions.

Observe that for all collections $DC$ of difference constraints,
$$DC[M_n] \le DC[N_n] \le DC[\Gamma_n^* ] \le DC[\Gamma_n]$$
where here $DC[{\mathcal C}]$ should be interpreted to mean the optimal objective value.
The first inequality follows because modular functions are weighted coverage functions,
the second inequality follows because weighted coverage functions
are entropic functions, and last inequality follows because
entropic functions are polymatroid functions.
It will be convenient to refer to these quantities as
the modular bound, the coverage bound, the entropic bound
and the polymatroid bound, respectively. It is known that for arbitrary
difference constraints that the gaps between these bounds can arbitrarily large~\cite{panda}.

\subsection{Database Applications and Background}

Our main motivation for considering these types of optimization problems arises
from applications in databases, for example in
bounding the cardinality  of
a table that is formed by joining smaller tables~\cite{NGOICDT2022}.
The difference constraints express knowledge
about the data
that might come from some external understanding of the relation
between various attributes, or information that is easily
obtained from the data.
So for example, consider a simple database
for  US postal service that contains the following three attributes (among others),
attribute 1 is zip code, attribute 2 is city and
attribute 3 is state. Assume that it is known that
that there are at most 50 states, that
no state contains more than 2,598 zip codes,
that no city is in more than one state, and no
state contains more than 9,197 cities.
Then suppose one joins together a collection of tables to get a table $T$
consisting of (zip code, city, state) tuples.
Then the number of tuples in $T$ can be upper bounded using the following instance of
$DC[\Gamma^*_n]$ problem:
 \begin{equation}
\begin{array}{rrclcl}
\displaystyle \max & \multicolumn{3}{l}{h(\{1, 2, 3 \} ) - h(\emptyset)} \\
\textrm{s.t.} & h(\{3\} \mid \emptyset) &\le& \lg 50  \\
&h(\{1, 3\} \mid \{3\}) &\le& \lg 2598  \\
&h(\{2, 3\} \mid \{2\})& \le& \lg 1  \\
&h(\{2, 3\} \mid \{3\})& \le& \lg 9197  \\
&\displaystyle h & \in & \Gamma_n^*
\end{array}
\nonumber
\end{equation}
In particular,  if this optimal objective is $E$ then $2^E$ is an upper
bound to the cardinality of the number of tuples in $T$ (essentially because uniform distributions have maximum entropy)~\cite{panda}. For certain common types of queries, e.g. disjunctive datalog queries, and joins with functional dependencies
and/or cardinality constraints, it is know that this entropic
bound is asymptotically tight~\cite{panda,csma}.
Unfortunately the space of entropic functions is a complicated.
For example,
if arbitrary linear constraints are allowed (instead of just difference constraints)
the problem is not even computable~\cite{Ting20219,arxiv.2206.03465}.
Thus the interest in the polymatroid bound in the database community derives from
the fact that it represents a clearly computable, and potentially even efficiently computable,
upper bound.

To date database researchers have largely considered certain
classes of difference constraints that commonly/naturally arise in database
applications~\cite{NGOICDT2022}. Of particular interest to us here are:
\begin{description}
\item[Acyclic Difference Instances:]
An acyclic instance is one where the dependency digraph of the difference constraints is acyclic.
The vertices of the dependency digraph is the universe $[n]$ and $(u, v)$ is a directed edge if
and only if there exists an $i \in [k]$  such that $u \in X_i$ and
$v \in Y_i - X_i$.
\item[Simple Difference Constraint:]
A simple difference constraint is one where $|X_i| \le 1$. A simple instance is one where all
difference constraints are simple.
\item[Cardinality Constraint:] A cardinality constraint is one where $|X_i| = 0$.
\item[Functional Dependency] A functional dependency is one where $c_i=0$.
\end{description}

In \cite{DBLP:conf/pods/000118} it was shown that, for acyclic instances  the modular bound is equal to the
polymatroid bound. The proof technique was to show that every polymatroid  function $h$ that
satisfies the difference constraints can be converted in a {\em modular} function $h'$ that also satisfies
the difference constraints and that has the same objective value as $h$. As the modular bound
can be computed by a polynomially sized linear program, this observation yields a polynomial
time algorithm to compute the entropic/polymatroid bound for acyclic instances.
In \cite{DBLP:conf/pods/KhamisK0S20} it was shown that, for simple instances  the coverage bound is equal to the
polymatroid bound. The proof technique was to show that every polymatroid function $h$
satisfying the difference constraints could be converted in a {\em weighted coverage function} $h'$ satisfying
the difference constraints and having the same objective value as $h$. This observation
doesn't immediately yield a polynomial time algorithm as the natural linear programming
formulation of  the coverage bound has exponentially many variables and exponentially many constraints.

There is a rich history of research in this area in the database theory community,
that we can't hope to do justice to here, but a good starting point for reader
are the surveys~\cite{NGOICDT2022,DBLP:conf/pods/000118}.
Also entropy implication problems are central to the area of information theory,
see for example the textbooks~\cite{InfoTheoryText,MR3183760}.

\subsection{Our Contributions}
\label{sec:contirbutions}

\subsubsection{The Dual-Project-Dual Technique}

We apply  a linear programming based technique,
which we will call the dual-project-dual  technique,
that allows us to rederive  the results that
$DC[M_n] = DC[\Gamma_n]$ for acyclic instances and $DC[N_n] = DC[\Gamma_n]$ for simple instances. We then extend these results in various ways.

Our starting point is a natural linear programming
formulation $P$ for the problem of computing $DC[\Gamma_n]$:
 \begin{equation}
\begin{array}{rrclcl}
\displaystyle P: \quad\quad  \max & \multicolumn{3}{l}{h([n]) - h(\emptyset)} \\
\textrm{s.t.} & h(Y\cup X) - h(X) - h(Y) +  h( Y \cap X) & \leq & 0 \qquad \forall X \forall Y  X \perp Y  \\
&h(Y) - h( X) & \geq & 0 \qquad\forall X \forall Y  X \subsetneq Y \\
 &\displaystyle h(Y_i) - h(X_i) &\le& c_i  \qquad \forall i \in [k]
\end{array}
\nonumber
\end{equation}
where there is a variable $h(X)$ for each subset $X$ of the universe  $[n]$,
and  $X \incomp Y$ means $X \not\subseteq Y$ and $Y \not\subseteq X$.
We will adopt the convention that all variables in our mathematical programs
are constrained to be nonnegative unless explicitly mentioned otherwise.
An optimal polymatroid function $h'$ would then be $h'(X) = h(X) - h(\emptyset)$,
where the values on the right hand side come from the linear program $P$.
Note that the linear program $P$ has both exponentially many variables
and exponentially many constraints. But critically this linear
program only has linearly many constraints  where the constant
on the right-hand-size of the constraint is nonzero
(one for each difference constraint).

The first step of our dual-project-dual technique is to take the
dual of $P$ to obtain a linear program $D$.
If we associate a dual variables $\sigma_{X,Y}$,
dual variables $\mu_{X,Y}$ and
dual variables $\delta_i$ with the
three types of constraints in $P$ (in that order), then the dual linear program $D$ is:

\begin{equation}
\begin{array}{rrclcl}
\displaystyle D: \quad \quad \min & \multicolumn{3}{l}{\sum_{i \in [k]} c_i \cdot \delta_i} \\
\textrm{s.t.} & \flow([n])  & \geq & 1  \\
&\flow(\emptyset)  & \geq & -1\\
&\flow(Z)              & \geq& 0, \qquad \forall Z  \ne \emptyset, [n]
\end{array}
\nonumber
\end{equation}
where $\flow(Z)$ is  defined as follows:
\begin{multline}
   \flow(Z) :=
     \sum_{i: Z = Y_i} \delta_i -
   \sum_{i: Z=X_i }\delta_i +
   \sum_{\substack{I\incomp J\\I\cap J = Z}}\sigma_{I,J}\\
   +
   \sum_{\substack{I'\incomp J'\\I'\cup J' = Z}}\sigma_{I',J'}
   - \sum_{J: J\incomp Z}\sigma_{Z,J}-
   \sum_{X: X\subset Z}\mu_{X,Z}+ \sum_{Y: Z\subset Y}\mu_{Z,Y}
   \label{eqn:flow}  
\end{multline}
See Figure \ref{fig:flow} for an illustration of $\flow(Z)$.
\begin{figure}[!ht]
\centering \begin{tikzpicture}[domain=0:20, 
every node/.style={font=\large}, scale=0.75, every node/.style={scale=0.7}]
\node[circle,draw] at (5,5) (Z) {$Z$};
\node[] at (3,8) (Y) {$Y$};
\node[] at (3,2) (X) {$X_i$};
\node[] at (5,7) (I) {$I$};
\node[] at (7,6) (J) {$J$};
\node[] at (7,8) (IJ) {$I\cup J$};
\node[] at (5,3) (I') {$I'$};
\node[] at (7,4) (J') {$J'$};
\node[] at (7,2) (IJ') {$I'\cap J'$};
\node[left = .2 of Z] (eq) {$I\cap J = I' \cup J' = Z$};  

\node[color=black!55!green] at (6,6.5) (sij) {$+\sigma_{I,J}$};
\node[color=black!55!green] at (6,3.5) (sij) {$+\sigma_{I',J'}$};

\draw[thick] (Z) -- (Y) node [midway,left=5pt,color=black!55!green] {$+\mu_{Z,Y}$};
\draw[thick] (Z) -- (X) node [midway,left=5pt,color=black!55!green] {$+\delta_i : Z=Y_i$};
\draw (Z) -- (I);
\draw (Z) -- (J);
\draw (Z) -- (J');
\draw (Z) -- (I');
\draw (I) -- (IJ);
\draw (J) -- (IJ);
\draw (I') -- (IJ');
\draw (J') -- (IJ');

\node[circle,draw] at (14,5) (Z1) {$Z$};
\node[] at (13,8) (Y1) {$Y_i$};
\node[] at (13,2) (X1) {$X$};
\node[] at (16,5) (J1) {$J$};
\node[] at (15,3) (IJ1) {$Z\cap J$};
\node[] at (15,7) (IJ2) {$Z\cup J$};

\node[color=black!10!red] at (15,5) (sij1) {$-\sigma_{Z,J}$};

\draw[thick] (Z1) -- (Y1) node [midway,left=5pt,color=black!10!red] {$-\delta_i: Z=X_i$};
\draw[thick] (Z1) -- (X1) node [midway,left=5pt,color=black!10!red] {$-\mu_{X,Z}$};
\draw (Z1) -- (IJ1);
\draw (Z1) -- (IJ2);
\draw (J1) -- (IJ2);
\draw (J1) -- (IJ1);
\end{tikzpicture}
\caption{Contributions of coefficients to $\flow(Z)$}
\label{fig:flow}
\end{figure}

So the problem modeled by linear program $D$ can be interpreted as a min-cost
$s$-$t$ flow problem on a hypergraph $\mathcal L$. The vertices of $\mathcal L$
are the  subsets
of $[n]$.
Each variable $\mu_{X,Y}$ represents a directed edge with no cost and infinite capacity from $Y$ to $X$, where $X \subsetneq Y$.
The variable $\delta_i$ represents
the capacity on the directed edge from $X_i$ to $Y_i$. The cost to buy this capacity is $\delta_i \cdot c_i$.
The variable $\sigma_{X,Y}$ represents a hyperedge, and $\sigma_{X,Y}=f$ means
$f$ units of flow leave each of $X$ and $Y$, and $f$ units of flow enters each of $X \cap Y$ and $X \cup Y$. Flow
can not be created at any vertex other than the empty set.
The objective is to minimize the cost of the bought capacity subject to the constraint that
this capacity can support a unit of flow from the source $s=\emptyset$ to
the sink $t=[n]$.

For the project step of our dual-project-dual technique,
we now consider the region $\Delta$ formed by projecting the feasible region for the linear program $D$
down onto the space corresponding to the $\delta_i$ variables (and perhaps
some other variables in $D$).
Then one can form a (potentially smaller)
linear program $D'$ equivalent to $D$ by replacing the constraints in $D$ by constraints
that define $\Delta$.

The final  step of the dual-project-dual technique is to take the
dual of $D'$ to obtain a linear program $P'$ that is equivalent to $P$,
but that is potentially smaller and/or simpler than $P$.

 \subsubsection{The Dual-Project-Dual Technique for Simple Instances}
In Section \ref{sect:specialcasessimple} we consider the application of
the dual-project-dual technique to simple instances.
For simple instances we show that this dual-project process results in the following linear program $D_{S}$:
\begin{align}
    D_S: && \min    & \qquad \sum_{i \in [k]} c_i \cdot \delta_i&& \label{eqn:DS} \\
         && \mbox{s.t.}  & \qquad \sum_{\substack{i \in [k]: \\  X_i \cap V  = \emptyset
                 \nonumber \\
V \cap Y_i \ne \emptyset}} \delta_i \geq 1  & \forall V \ne \emptyset, V\subseteq [n]   \nonumber
\end{align}
We then observe that the dual $P_{S}$ of the linear program $D_S$ is the natural linear program
for optimizing over weighted coverage functions:
\begin{align}
    P_S: && \max  &\qquad \sum_{\emptyset \neq V\subseteq [n]} \lambda_V && \label{eqn:PS} \\
         && \text{s.t.} & \qquad \sum_{\substack{V \ne \emptyset: \\  X_i \cap V  = \emptyset \\ V \cap Y_i \ne \emptyset}} \lambda_V \le c_i  & \forall i \in [k] \nonumber
\end{align}
 Here $\lambda_V$ is the weight of the vertex connected to vertices in $V$
 in the standard bipartite representation of a weighted coverage function (without loss of generality one can assume
 that there is only one such vertex).
Thus we rederive the fact that $DC[N_n] = DC[\Gamma_n^*] = DC[\Gamma_n]$
for  simple instances.

We then  consider the separation problem for $D_S$,
where the values of $\delta_i$'s are given,
and the goal is to determine whether this is $D_S$-feasible.
We show that this separation problem is equivalent to $n$ related
$s$-$t$ flow problems on a particular subgraph $G$ of $\mathcal L$.
The vertices in $G$ are the empty set, the singleton sets,
and the sets $Y_i$, $i \in [k]$.
The edges in $G$ are all the $\delta_i$ edges from $\mathcal L$,
and all the $\mu_{X,Y}$ edges from $\mathcal L$ where $X$ and $Y$
are vertices in $G$ and $|X| \le 1$.
We then show that a setting of the $\delta_i$ variables is feasible in
$D_S$ if and only if for all $t \in [n]$ it is possible to route a unit
of flow in $G$ from a source $s=\emptyset$ to the sink $\{t\}$.
Note that each of these flow problems is independent, so while each flow
has to respect the edge capacities, the aggregate flows over all sinks can
exceed the edge capacities. As an immediate consequence, we can conclude
that there is a polynomial-time combinatorial algorithm for the separation
problem for $D_s$, and the following  linear program
$D'_S$ is equivalent to $D_S$:
\begin{align}
    D'_S: & &\min \sum_{i \in [k]} c_i \cdot \delta_i \label{eqn:DprimeS} \\
          & \textrm{s.t.} &\qquad f_{i,t} &\leq \delta_i  && \forall i \in [k] & \forall t
          \in [n] \nonumber\\
          && \qquad \flow_t(t) &\geq 1  & &&  \forall t \in [n]\nonumber\\
          && \qquad \flow_t(\emptyset) &\geq -1  &&& \forall t \in [n]\nonumber\\
          && \qquad \flow_t(Z) &\geq 0  && \forall Z   \in  G \setminus \{\emptyset\}
    \setminus \{t\} & \forall t \in [n] \nonumber
\end{align}
where $ \flow_t(Z)$ is  defined as follows:
\begin{multline}
   \flow_t(Z) :=
     \sum_{i: Z = Y_i} f_{i,t} -
   \sum_{i: Z=X_i }f_{i,t} +
   \sum_{X: X\subset Z}\mu_{X,Z, t}+ \sum_{Y: Z\subset Y}\mu_{Z,Y, t}
 \nonumber
\end{multline}
Here the interpretation of $f_{i,t}$ is the flow from $X_i$
to $Y_i$ in $G$ for the flow problem where
the sink is $\{t\}$.
As $D'_S$ is of polynomial size, this yields a polynomial-time algorithm
to compute the coverage/entropic/submodular bound for simple instances.

 \subsubsection{The Dual-Project-Dual Technique for Strongly Connected Components}
In section \ref{sect:specialcasesSCC} we  consider applying the dual-project-dual
technique for computing the polymatroid bound for
general difference constraints.
Further we consider the effect of the strongly connected components $\CC_1, \ldots \CC_h$ of
the dependency graph for the difference constraints.
In this case we consider the projection from the feasible region for $D$  onto the
space spanned by the $\delta_i$ variables, the $\mu_{X,Y} $ variables where $Y$ is a subset of some connected component of
the dependency graph, and the $\sigma_{X, Y}$ variables where $X \cup Y$ is a subset of some connected component of
the dependency graph.
Without loss of generality, assume  $\CC_1, \ldots, \CC_h$ is a topological sort of
 the strongly connected components of the dependency graph.
We show that the result of such a projection is the following
linear program $D_{SCC}$:

\begin{equation}
\begin{array}{rrclcl}
\displaystyle D_{SCC}: \quad \quad \min & \multicolumn{3}{l}{ \sum_{i \in [k]} c_i \cdot \delta_i } \\
\textrm{s.t.} & \flow(Z,\CC_j) &\ge & 0    \qquad \forall j \in [h] \quad \forall Z~\emptyset \subsetneq Z \subsetneq \CC_j \\
&\flow(\emptyset, \CC_j)  & \geq & -1 \qquad \forall j \in [h]\\
&\flow(\CC_j,\CC_j)
     &\ge& 1  \qquad \forall j \in [h]
\end{array} \nonumber
\end{equation}
where we define $\flow$ as follows:
\begin{align*}
    \flow(Z,U)&:= &
  \sum_{\substack{ I \cup J \subseteq U\\   {I \perp J } \\{I\cap J= Z}}}
  \sigma_{I,J} + \sum_{\substack{{I \perp J } \\
    {I\cup J= Z}}} \sigma_{I,J} - \sum_{\substack{ J \in U \\  J \perp Z}} \sigma_{Z,J} \\
    &&-\sum_{ X\subsetneq Z} \mu_{X,Z} + \sum_{ Z\subsetneq  Y \subseteq U} \mu_{Z,Y}
   +\sum_{\substack{ i \in [k] \\  Z= Y_i \cap U}} \delta_i - \sum_{\substack{ i \in [k] \\  Z= X_i \cap U}} \delta_i  \\
\end{align*}
The way to think about the problem modeled by
the linear program $D_{SCC}$ is that consists
of essentially one hypergraph flow problem, as is modeled by the linear program $D$, for each connected
component. Moreover, the flow problems for the connected components
are independent with the exception
of sharing the capacities of the projection of the
difference constraints into the connected components.

The  dual of the linear program $D_{SCC}$  is the following linear program $P_{SCC}$:
 \begin{equation}
\begin{array}{rrclcl}
\displaystyle P_{SCC}: \quad \max & \multicolumn{3}{l}{\sum_{j=1}^j (h_j(\CC_j) - h_j(\emptyset))} \\
\textrm{s.t.} & h_j(Y\cup X) - h_j(X) - h_j(Y) +  h_j( Y \cap X) & \leq & 0 \qquad \substack{\forall j \in [h] \\ \forall X  \forall Y  X \perp Y \\ \text{ and } X \cup Y \in \CC_j} \\
&h_j(Y) - h_j( X) & \geq & 0 \qquad \substack{ \forall j \in [h] \\ \forall X \forall Y  X \subsetneq Y  \\ \text{ and } Y \in \CC_j}\\
 &\displaystyle \sum_{j=1}^j (h_j(\CC_j \cap Y_i) - h_j(\CC_j \cap X_i))   &\le& c_i  \qquad \forall i \in [k]
\end{array} \nonumber
\end{equation}
The linear program $P_{SCC}$ models computing the optimal objective over what we call semimodular functions.
A function is semimodular with respect to a partition $\CC_1, \ldots \CC_h$ of the
universe if for each $j \in [h]$ there exists a polymatroid function
function $h_j$ on $\CC_j$ such that for all $X \subseteq [n]$, it is the case that $h(X) = \sum_{j \in [h]} h_j(X \cap \CC_j)$.
So if the strongly connected components are singletons, then a semimodular function is modular.
Thus we can conclude that there is always an optimal solution to $DC[\Gamma_n]$ that is semimodular
with respect to the connected components,
and we recover the result that the modular bound, the entropic bound and
the polymatroid bound are equal for acyclic instances.
As the size of $P_{SCC}$ is bounded
by a polynomial function in $n$ times an exponential function of the maximum number of vertices in any strongly connected component,
this yields a fixed parameter tractable algorithm for computing the polymatroid bound
when the parameter is the maximum number of vertices in any strongly connected component.

Note that given these results one can view the conversion of an optimal polymatroid function to an optimal
modular function for acyclic instances in \cite{DBLP:conf/pods/000118}, and the conversion of an optimal polymatroid function into
an optimal weighted coverage function for simple instances in \cite{DBLP:conf/pods/KhamisK0S20}, as being
equivalent to the dual process of projecting down to the
variables in the objective in the dual space.

 \subsubsection{Lower Bound Reductions}
Whether the polymatroid bound for
arbitrary difference constraints can be computed in polynomial-time is a fascinating, and seemingly challenging, open question.
It is also natural to ask whether we can apply our techniques to
other natural classes of difference constraints,
but even this is challenging.  We have a collection of results that
illustrate some of the obstacles to extending our results. We show that computing the coverage bound for general
difference constraints is NP-hard.
We show how to efficiently reduce the problem
of computing the polymatroid bound on
general difference constraints to computing the polymatroid bound on difference constraints  that are
 a union of an acyclic instance and a simple instance.
 And we show how to efficiently reduce the problem
of computing the polymatroid bound on
general difference constraints to computing the polymatroid bound on difference constraints  where for all difference constraints $i\in [k]$
it is the case that $|X_i| \le 2$ and $|Y_i| \leq 3$.
This shows that computing the polymatroid bound for such instances is as hard as computing
the polymatroid bound in general. 

Whether the polymatroid bound for
arbitrary difference constraints can be computed in polynomial-time is a fascinating, and seemingly challenging, open question.
It is also natural to ask whether we can apply our techniques to
other natural classes of difference constraints,
but even this is challenging.  We have a collection of results that
illustrate some of the obstacles to extending our results.  
In section     \ref{sect:normalhard} we show that computing the coverage bound for general
difference constraints is NP-hard.
In section \ref{sect:polymatroidreductions}  we show how to efficiently reduce the problem
of computing the polymatroid bound on
general difference constraints to computing the polymatroid bound on difference constraints  that are
 a union of an acyclic instance and a simple instance.
 And we show how to efficiently reduce the problem
of computing the polymatroid bound on
general difference constraints to computing the polymatroid bound on difference constraints  where for all difference constraints $i\in [k]$
it is the case that $|X_i| \le 2$ and $|Y_i| \leq 3$.
This shows that computing the polymatroid bound for such instances is as hard as computing
the polymatroid bound in general.

\section{Formal Definitions}

A function $h:2^n \rightarrow \Re^+$ is {\em entropic} if there exist
discrete\footnote{It is beyond the scope of this discussion to deal with subtleties arising
from differential entropies in the continuous case.} random variables $z_1, \ldots, z_n$
such that for all $X \subseteq [n]$ it is the case
that $h(X)$ is  the entropy of the marginal distribution on the variables $z_j$ where $j \in X$.
A function $h:2^n \rightarrow \Re^+$ is a weighted coverage function if there exists a positive integer $m$,
a collection of subsets $T_1, \ldots T_n$ of $[m]$, and nonnegative weights $w_1, \ldots, w_m$
such that $h(X) = \sum_{j \in [m]: \exists i \in X  j \in T_i} w_j$.
A function $h:2^n \rightarrow \Re^+$ is a modular function if there exists
nonnegative numbers $z_1, \ldots z_n$ such that $h(X) = \sum_{i \in X} z_i$.
A function $h:2^n \rightarrow \Re^+$ is a polymatroid  function if
it is nonnegative, normalized ($h(\emptyset) =0$), monotonitically nondecreasing ($h(X) \le h(Y)$ if $X \subseteq Y$)
and submodular ($h(Y\cup X) +h (Y\cap X) \leq h(X) + h(Y)$).
An algorithm $A$  is a fixed parameter tractable algorithm in the parameter $k$
if the running time of $A$ can be bounded by a polynomial in the input size times
some function of $k$.

\section{The Dual Project Dual Approach for Simple Instances}
\label{sect:specialcasessimple}

We  show in Lemma \ref{lem:simplecut} that the feasible region of  the linear program
$D_S$  is identical to the feasible region of $D'_S$.
We then show in Lemma \ref{lem:simpleliftA}
and Lemma \ref{lem:simpleliftB}
that for simple instances
the linear program $D$ is equivalent to
the linear program $D'_S$.

\begin{lemma} \label{lem:simplecut}
The feasible region of  the linear program $D_S$
is identical to the feasible region of $D'_S$.
\end{lemma}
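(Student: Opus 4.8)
The plan is to read the equivalence between $D_S$ and $D'_S$ through the max-flow/min-cut theorem, exactly as foreshadowed by the informal claim that a setting of the $\delta_i$'s is $D_S$-feasible if and only if for every $t \in [n]$ one can route a unit of flow from $\emptyset$ to $\{t\}$ in $G$ using the $\delta_i$'s as capacities. Since $D'_S$ carries the extra flow variables $f_{i,t}$ and $\mu_{X,Y,t}$, the assertion that the two feasible regions are ``identical'' must be read as: a vector $\delta$ satisfies every constraint of $D_S$ if and only if $\delta$ can be extended by flow variables to a feasible point of $D'_S$. For a fixed sink $\{t\}$, the $D'_S$ constraints involving the $t$-th copy of the flow variables simply assert that $G$ admits a fractional unit $\emptyset$-to-$\{t\}$ flow respecting capacity $\delta_i$ on each $\delta_i$-edge and infinite capacity on each $\mu$-edge; by max-flow/min-cut such a flow exists precisely when every $\emptyset$-$\{t\}$ cut of $G$ has capacity at least $1$. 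Because the $\mu$-edges have infinite capacity, the only cuts of finite capacity are those across which no $\mu$-edge points from the source side to the sink side, and such cuts sever only $\delta_i$-edges.

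The heart of the argument is a correspondence between the finite cuts of $G$ and the nonempty index sets $V \subseteq [n]$ labelling the constraints of $D_S$. For the direction ``$D'_S$-extendable $\Rightarrow$ $D_S$-feasible'', fix a nonempty $V$, pick any $t \in V$, and take source side $S_V := \{Z \in G : Z \cap V = \emptyset\}$; then $\emptyset \in S_V$ while $\{t\} \notin S_V$, so $(S_V,\overline{S_V})$ is an $\emptyset$-$\{t\}$ cut. Since $X \subseteq Y$ forces $X \cap V \subseteq Y \cap V$, no $\mu$-edge (which runs from a larger set down to a smaller one) can leave $S_V$, so the cut is finite. A $\delta_i$-edge, running from $X_i$ to $Y_i$, crosses $S_V$ exactly when $X_i \cap V = \emptyset$ and $Y_i \cap V \ne \emptyset$, so the capacity of this cut equals the left-hand side of the $V$-constraint of $D_S$. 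Routability for this $t$ (the $D'_S$ hypothesis) makes the cut capacity at least $1$, giving the $V$-constraint.

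For the converse, ``$D_S$-feasible $\Rightarrow$ $D'_S$-extendable'', I start from an arbitrary finite $\emptyset$-$\{t\}$ cut with source side $S$ and set $V := \{ j \in [n] : \{j\} \in \overline{S}\}$, which contains $t$ and is nonempty. Here the two structural features of the simple case enter. First, because $|X_i| \le 1$, the vertex $X_i$ is either $\emptyset$ (always on the source side) or a singleton, and in either case $X_i \in S \iff X_i \cap V = \emptyset$. Second, finiteness means $\overline{S}$ is upward closed along $\mu$-edges; since $G$ contains a $\mu$-edge from $Y_i$ down to each singleton $\{j\} \subsetneq Y_i$, any $j \in V \cap Y_i$ forces $Y_i \in \overline{S}$ (the degenerate case $Y_i = \{j\}$ being immediate). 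Hence every $i$ with $X_i \cap V = \emptyset$ and $V \cap Y_i \ne \emptyset$ contributes a $\delta_i$-edge crossing the cut, so the cut capacity is at least the left-hand side of the $V$-constraint, which is at least $1$ by $D_S$-feasibility. Thus every finite $\emptyset$-$\{t\}$ cut has capacity at least $1$ for each $t$, and max-flow/min-cut delivers the required unit flows, i.e.\ an extension of $\delta$ to a feasible point of $D'_S$.

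I expect the main obstacle to be pinning down exactly which cuts are relevant and verifying that the $\mu$-edge closure property translates faithfully into the intersection conditions $X_i \cap V = \emptyset$ and $V \cap Y_i \ne \emptyset$ defining the $D_S$-constraints; this is precisely where the hypotheses $|X_i| \le 1$ and the presence in $G$ of every $\mu$-edge into a singleton are essential, and dropping either would break the correspondence. The remaining steps — reading off the edge capacities, invoking max-flow/min-cut, and handling the degenerate singleton-$Y_i$ case — are routine.
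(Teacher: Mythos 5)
Your proof is correct and follows essentially the same route as the paper's: both reduce the equivalence to max-flow/min-cut in $G$, matching each nonempty $V \subseteq [n]$ with a cut whose sink side consists of the vertices meeting $V$, and reading the $V$-constraint's left-hand side off as the cut capacity. If anything, your converse direction is more careful than the paper's, which takes the sink side to be only the singletons $W := \{\{i\} \mid i \in V\}$ --- a cut that would be crossed by infinite-capacity $\mu$-edges from any non-singleton $Y_i$ meeting $V$ --- whereas your upward-closure argument (placing every vertex $Z$ with $Z \cap V \neq \emptyset$ on the sink side, using $|X_i| \le 1$ and the $\mu$-edges into singletons) correctly produces a finite cut of the claimed capacity.
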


\begin{proof}
Assume that for some setting of the $\delta_i$
variables, that $D'_S$ is infeasible.
Then there exists a $t \in [n]$ such that the
max flow between $s=\emptyset$ and $\{t\}$ is less
than 1. Since the value of the maximum  $s$-$t$ flow
is equal the value of the minimum $s$-$t$ cut,
there must be a subset $C$ of vertices in
$G$ such that $s \notin C$ and $t \in C$,
where the aggregate capacities entering $C$ is
less than one. Thus by taking $V := \{ i \in [n] \ | \ \{i\} \in C \}$ 
we obtain a violated constraint for $D_S$.

Conversely, assume that for some setting of the $\delta_i$
variables, that $D_S$ is infeasible.
Then there is a nonempty $V $ such that
$ \sum_{\substack{i \in [k]: \\  X_i \cap V  = \emptyset \\ V \cap Y_i \ne \emptyset}} \delta_i < 1  $. Consider the cut 
$(V(G) \setminus W, W)$, where $W := \{ \{i\} \ | \ i \in V\}$.  This cut has value less than one.

 Thus again by  appealing to
the fact that the value of the minimum $s$-$t$ cut
is equal to the maximum $s$-$t$ flow. we can conclude
that this setting of the $\delta_i$ variables is
not feasible for $D'_S$.
\hfill $\square$
\end{proof}

\begin{lemma} \label{lem:simpleliftA}
For simple instances,
if a setting of the $\delta_i$ variables
can be extended to a feasible
solution for
the linear program $D$
then this same setting of the $\delta$ variables
is feasible for the linear program $D'_S$.
\end{lemma}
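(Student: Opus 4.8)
The plan is to certify feasibility for $D'_S$ indirectly: since Lemma~\ref{lem:simplecut} already shows that (in $\delta$-space) the feasible region of $D_S$ coincides with that of $D'_S$, it suffices to prove that the given $\delta$ satisfies every constraint of $D_S$. So, assuming $\delta$ extends to a feasible point $(\delta,\sigma,\mu)$ of $D$, I would fix an arbitrary nonempty $V\subseteq[n]$ and establish the single inequality $\sum_{i:\,X_i\cap V=\emptyset,\ Y_i\cap V\neq\emptyset}\delta_i\ge 1$. The idea is to obtain this inequality as an explicit nonnegative aggregation of the constraints of $D$, in which the submodularity ($\sigma$) and monotonicity ($\mu$) terms project away.

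Concretely, for each $Z\subseteq[n]$ I would set the multiplier $y_Z:=1$ if $Z\cap V\neq\emptyset$ and $y_Z:=0$ otherwise. These are nonnegative, so aggregating the constraints $\flow(Z)\ge b_Z$ of $D$ (where $b_{[n]}=1$, $b_\emptyset=-1$, and $b_Z=0$ otherwise) yields the valid inequality $\sum_Z y_Z\,\flow(Z)\ge\sum_Z y_Z\,b_Z$. Since $V\neq\emptyset$ forces $y_{[n]}=1$ while always $y_\emptyset=0$, the right-hand side equals exactly $1$. The remaining work is to expand the left-hand side in the variables $\delta,\sigma,\mu$ and read off its coefficients.

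Reading the contribution pattern of each variable off the definition of $\flow$ (equivalently Figure~\ref{fig:flow}): $\delta_i$ contributes $+1$ to $\flow(Y_i)$ and $-1$ to $\flow(X_i)$; $\mu_{X,Y}$ contributes $+1$ to $\flow(X)$ and $-1$ to $\flow(Y)$; and $\sigma_{I,J}$ contributes $+1$ to each of $\flow(I\cap J),\flow(I\cup J)$ and $-1$ to each of $\flow(I),\flow(J)$. Hence in $\sum_Z y_Z\,\flow(Z)$ the coefficient of $\delta_i$ is $y_{Y_i}-y_{X_i}$, that of $\mu_{X,Y}$ is $y_X-y_Y$, and that of $\sigma_{I,J}$ is $y_{I\cap J}+y_{I\cup J}-y_I-y_J$. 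The key observation is that $Z\mapsto y_Z$ is precisely the hitting indicator of $V$, which is a monotone submodular $0/1$ function: monotonicity gives $y_X-y_Y\le 0$ for $X\subsetneq Y$, making every $\mu$ coefficient nonpositive, while submodularity — following from $y_{I\cup J}=y_I\vee y_J$, $y_{I\cap J}\le y_I\wedge y_J$, and $(a\wedge b)+(a\vee b)=a+b$ — makes every $\sigma$ coefficient nonpositive. For the $\delta_i$ coefficient, monotonicity along $X_i\subsetneq Y_i$ forces $y_{Y_i}-y_{X_i}\in\{0,1\}$, and it equals $1$ exactly when $X_i\cap V=\emptyset$ and $Y_i\cap V\neq\emptyset$, i.e.\ on exactly the index set of the $D_S$ constraint for $V$.

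Putting this together, $\sum_Z y_Z\,\flow(Z)$ equals the $D_S$ left-hand side plus a combination of $\sigma$ and $\mu$ with nonpositive coefficients; as $\sigma,\mu\ge 0$ this combination is nonpositive, so $\sum_{i:\,X_i\cap V=\emptyset,\ Y_i\cap V\neq\emptyset}\delta_i\ge\sum_Z y_Z\,\flow(Z)\ge 1$, which is the required $D_S$ constraint. Since $V$ was arbitrary, $\delta$ is $D_S$-feasible and hence $D'_S$-feasible by Lemma~\ref{lem:simplecut}. I expect the only genuinely delicate step to be verifying nonpositivity of the aggregated $\sigma$-coefficient, i.e.\ the submodularity of the hitting indicator — this is exactly what lets the submodularity hyperedges of $D$ be cleanly projected away — though it remains elementary. (I would also note in passing that this direction never uses the simplicity hypothesis $|X_i|\le 1$; simplicity is what the converse Lemma~\ref{lem:simpleliftB} needs in order to show these cut inequalities are the \emph{only} constraints of the projection.)
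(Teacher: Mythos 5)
Your proof is correct. Every step checks out: the multipliers $y_Z=\mathbb{1}[Z\cap V\neq\emptyset]$ are nonnegative; the aggregated right-hand side is exactly $1$ because $y_{[n]}=1$ (as $V\neq\emptyset$) and $y_\emptyset=0$; and the coefficients you read off for $\delta_i$, $\mu_{X,Y}$ and $\sigma_{I,J}$ match the definition of $\flow$ in the paper, provided $\sigma_{I,J}$ is treated as unordered so that it exits both $I$ and $J$, which is the intended semantics. The paper proves the contrapositive instead: from a $D'_S$-infeasible $\delta$ it extracts, via max-flow/min-cut in $G$, a set $V$ with $\sum_{i:\,X_i\cap V=\emptyset,\ Y_i\cap V\neq\emptyset}\delta_i<1$, then partitions the lattice $\mathcal L$ into $C=\{Z:Z\cap V=\emptyset\}$ and $V'=\{Z:Z\cap V\neq\emptyset\}$ and argues combinatorially that no $\mu$-edge carries flow from $C$ into $V'$ and no $\sigma$-hyperedge routes net flow out of $C$, so less than one unit can reach $[n]$. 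Your argument is the exact LP-dual repackaging of that cut argument: your support $\{Z:y_Z=1\}$ is the paper's $V'$; nonpositivity of the $\mu$-coefficient $y_X-y_Y$ is the paper's ``no $\mu$-edge enters $V'$''; and nonpositivity of the $\sigma$-coefficient $y_{I\cap J}+y_{I\cup J}-y_I-y_J$ (your submodular-hitting-indicator step) is precisely the paper's case analysis that $X,Y\in C$ implies $X\cup Y,X\cap Y\in C$, while otherwise $\sigma_{X,Y}$ routes no net flow out of $C$. What your version buys: a line-by-line verifiable valid-inequality certificate, with max-flow/min-cut confined entirely to Lemma~\ref{lem:simplecut} (which you invoke for the $D_S$/$D'_S$ equivalence) rather than used inside this lemma, plus the explicit — and correct — observation that this direction never uses the simplicity hypothesis $|X_i|\le 1$, which is needed only for the converse, Lemma~\ref{lem:simpleliftB}. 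What the paper's version buys: it stays within the flow picture and produces the violating cut directly from the $D'_S$ infeasibility, without routing through the $D_S$ formulation.
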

\begin{proof}
We prove the contrapositive. Consider a setting
of the $\delta_i$ variables that is not
feasible for $D'_S$. Then we know that
there exists a $t \in [n] $ such that there is a cut of value less than one that
that separates  $s=\emptyset$ and $\{t\}$   in $G$.

Let $V$  be the union of all singleton sets that are on the same side of this cut as $\{t\}$. We know that, $\sum_{i \in [k], X_i \cap V = \emptyset, Y_i\cap V \neq \emptyset}\delta_i < 1$.

Now consider the hypergraph $\mathcal L$ that is the lattice as defined in Section~\ref{sec:contirbutions}.  Let $V'$ be all nodes in $\mathcal L$ which contain at least one vertex in $V$ and $C$ be the remaining verticies. Then in the hypergraph $\mathcal L$ the aggregate flow
into vertices in $V'$ (i.e. out of $C$) can be at most
$\sum_{i \in [k], X_i \cap V = \emptyset, Y_i\cap V \neq \emptyset}\delta_i < 1$.
This is because no $\mu_{X,Y}$ edge can
cause flow to enter $V'$ from $C$;
and  no $\sigma_{X,Y}$ hyperedge can cause flow to leave $C$
as if $X \in C$ and $Y\in C$ then $X \cup Y \in C$,
and if either $X \notin C$ or $Y \notin C$ then $\sigma_{X,Y}$ does not route
any net flow out of $C$.
\hfill $\square$
\end{proof}

\begin{lemma} \label{lem:simpleliftB}
For simple instances, if
a setting of the $\delta_i$ variables is feasible  for
the linear program $D'_S$
then this same setting of the $\delta$ variables
can be extended to a feasible solution for the linear program  $D$.
\end{lemma}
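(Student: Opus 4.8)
The plan is to prove the contrapositive via the strong-duality (max-flow/min-cut) characterization of feasibility of the flow program $D$. I fix the given setting of the $\delta_i$ and regard $D$, with the $\sigma$- and $\mu$-variables free, as a pure feasibility question: does the hypergraph $\mathcal L$ admit one unit of $\emptyset$-to-$[n]$ flow when each $\delta_i$-edge has capacity $\delta_i$ and every $\mu$- and $\sigma$-edge is free with infinite capacity? By strong LP duality (equivalently, min-cut/max-flow on $\mathcal L$), such a flow exists if and only if every cut separating $s=\emptyset$ from $t=[n]$ has capacity at least $1$. So it suffices to show that a $D'_S$-feasible $\delta$ makes every $\emptyset$-$[n]$ cut of $\mathcal L$ have capacity at least $1$.

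The first step is to pass from $D'_S$ to the cut program $D_S$ using Lemma \ref{lem:simplecut}: a $D'_S$-feasible $\delta$ is $D_S$-feasible, i.e. $\sum_{i:\, X_i \cap V = \emptyset,\ V \cap Y_i \neq \emptyset} \delta_i \geq 1$ for every nonempty $V \subseteq [n]$. The second, and main, step is to classify the finite-capacity cuts of $\mathcal L$. Let $A$ be the source side, so $\emptyset \in A$ and $[n] \notin A$. Any cut that severs an infinite edge has capacity $\infty \ge 1$ and is vacuously fine, so I only analyze cuts that cut no $\mu$- or $\sigma$-edge. Because each $\mu_{X,Y}$ runs downward from $Y$ to $X \subsetneq Y$ with infinite capacity, avoiding an infinite $\mu$-crossing forces $A$ to be downward closed. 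Because a hyperedge $\sigma_{X,Y}$ with $X,Y \in A$ delivers flow to $X \cup Y$, the very observation isolated in the proof of Lemma \ref{lem:simpleliftA} (now applied to the source side $A$ in place of $C$) forces $A$ to be closed under union: whenever $X,Y \in A$ we must have $X \cup Y \in A$, since otherwise $\sigma_{X,Y}$ would carry unbounded flow from $A$ across the cut. A family that is both downward closed and union-closed is exactly $A = 2^M$ for $M := \bigcup_{S \in A} S$, and since $[n] \notin A$ we have $M \subsetneq [n]$.

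With the cut pinned down to $A = 2^M$, the only finite-capacity edges crossing from $A$ to its complement are the $\delta_i$-edges with $X_i \subseteq M$ and $Y_i \not\subseteq M$. Writing $V := [n] \setminus M$, which is nonempty, these are precisely the indices with $X_i \cap V = \emptyset$ and $V \cap Y_i \neq \emptyset$, so the cut capacity equals $\sum_{i:\, X_i \cap V = \emptyset,\ V \cap Y_i \neq \emptyset} \delta_i$, which is at least $1$ by the $D_S$-feasibility established in the first step. Hence every cut of $\mathcal L$ has capacity at least $1$, the required unit flow exists, and setting the $\sigma$- and $\mu$-variables equal to this flow extends $\delta$ to a feasible solution of $D$.

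I expect the delicate point to be making the cut classification rigorous for the hypergraph $\mathcal L$: ordinary max-flow/min-cut does not apply verbatim once the $\sigma$-hyperedges are present. I would justify the ``a finite cut cuts no $\sigma$-edge, hence $A$ is union-closed'' step through the same net-flow bookkeeping used in Lemma \ref{lem:simpleliftA} (if $X \notin A$ or $Y \notin A$ then $\sigma_{X,Y}$ routes no net flow across the cut, while if both lie in $A$ then so does $X \cup Y$); failing a clean combinatorial statement, I would instead invoke Farkas' lemma on the feasibility LP $D$, whose infeasibility certificate is exactly a vertex potential realizing such a cut. The one remaining routine check is that the intersection output $X \cap Y$ of a hyperedge never produces an uncounted crossing, which holds because $X \cap Y$ lands inside $A$ by downward closure.
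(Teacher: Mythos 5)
Your proof has a genuine gap at its central step: the invocation of max-flow/min-cut (``strong duality'') for the hypergraph flow problem on $\mathcal L$. What you actually establish is only weak duality --- your cut classification correctly shows that any finite-capacity cut corresponds to a downward- and union-closed family $A = 2^M$, and that such a cut bounds the flow (this is essentially Lemma \ref{lem:simpleliftA} run in reverse). But the converse, that \emph{cut feasibility implies flow feasibility}, is precisely the content of Lemma \ref{lem:simpleliftB} itself, and it does not follow from any off-the-shelf duality: by Farkas' lemma the infeasibility certificates of $D$ (with $\delta$ fixed) are arbitrary monotone submodular potentials $F$ --- the $\sigma$-constraints dualize to $F(X)+F(Y)\geq F(X\cup Y)+F(X\cap Y)$ and the $\mu$-constraints to monotonicity --- not just $0/1$ potentials. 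The $0/1$ certificates are exactly your cuts $2^M$, but a general polymatroid certificate need not round to one (the level sets of a monotone submodular $F$ are downward closed but not union-closed), so your fallback claim that ``the infeasibility certificate is exactly a vertex potential realizing such a cut'' is false. A decisive sanity check: your argument nowhere uses the simplicity hypothesis $|X_i|\leq 1$. If it were valid it would show that cut feasibility equals hypergraph-flow feasibility for \emph{arbitrary} difference constraints, hence that the coverage bound (the optimum of $D_S$, dual to $P_S$) always equals the polymatroid bound (the optimum of $D$, dual to $P$) --- contradicting the arbitrarily large gaps between these bounds known for general instances~\cite{panda}.

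The paper closes exactly this gap constructively, and that construction is the real work of the lemma: it takes the $n$ independent unit flows in the small graph $G$ guaranteed by $D'_S$-feasibility, and lifts them into the lattice $\mathcal L$ by induction on $i\in[n]$, explicitly setting the $\mu$- and $\sigma$-variables via the forward path process and the restorative process while maintaining the outer loop invariant (excess $1$ at $[i]$, zero elsewhere, and conservation of the remaining $\delta_j$-capacities; Lemmas \ref{lem:edges-simple} and \ref{lem:edges-simple-backward}). It is in this lifting --- in particular in locating a prefix $[t]$ with $\mu_{A\cup[t],\,B\cup[t]}\geq\epsilon$ when a $\delta$-edge is traversed, which relies on $X_i$ being a singleton or empty --- that simplicity is actually used. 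To repair your proposal you would need to either reproduce such an explicit extension of $\delta$ to a feasible $(\delta,\mu,\sigma)$, or prove directly that for simple instances every polymatroid infeasibility certificate can be replaced by a $0/1$ cut certificate; neither is supplied by the net-flow bookkeeping of Lemma \ref{lem:simpleliftA}, which only gives the easy direction.
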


The rest of the section is devoted to proving Lemma~\ref{lem:simpleliftB}.
We constructively show how to extend a feasible solution for $D'_S$ to a feasible solution for $D$ by
setting
$\mu_{X,Y}$ and $\sigma_{X,Y}$ variables.
Let $q$ be an integer such that the setting of every $\delta_i$, $i \in [k]$, variable is an integer
multiple of $1/q$, and let $\epsilon = 1/q$.
Initially $\mu_{ X_j,  Y_j} = \delta_{j}$ for each difference constraint $j \in [k]$,
and all other $\mu $ and $\sigma$ variables are zero.
We now give an iterative process to modify these variable settings.
The outer loop of the constructive algorithm to iterates over  $i\in[n]$.
This  loop will maintain the following outer loop invariant on the setting of the variables in $D$:
\begin{enumerate}
\item The excess  at the vertex $[i]$ in $D$ is $1$.
\item
The excess at every vertex in $D$, besides $\emptyset$ and $[i]$ is zero.
    \item For every  $j\in [k]$, if  $Y_j \cup [i] \neq X_j \cup [i]$ then  $\sum_{0 \leq t \leq i}   \mu_{[t] \cup X_j, [t] \cup Y_j} = \delta_j$.
    \item Each variable is an integer multiple of $\epsilon$.
\end{enumerate}

Note this inductive invariant is initially satisfied, if one interprets $[0]$ to be the empty set,
and will represent a feasible solution for $D$ when $i=n$.
To extend the inductive hypothesis from $i$ to $i+1$, let
 $\mathcal{P}^{i+1}  $ be the collection of simple flow
  paths in the graph $G$ that each route an $\epsilon$ unit of flow from $s=\emptyset$ to $t=\{i+1\}$ in $D'_S$.
  Our construction then iterates through the paths in $\mathcal{P}^{i+1}$, which we call the forward path process, and then iterates through these paths again in what we call the restorative process.

The forward path process processes 
 a path  $P$ in $\mathcal{P}^{i+1}$ with edges $(A_1, B_1), \ldots, (A_u, B_u)$,  where $A_1 = \emptyset$ and $B_u =\{i+1\}$ as follows.
Let $P[\ell]$  be the path  in $\mathcal L$
  that is formed from $P$ by deleting  edges $(A_h, B_h)$  where
  $A_h \cup [\ell] = B_h \cup [\ell]$, and replacing edges $(A_h, B_h)$ where
  $A_h \cup [\ell] \ne B_h \cup [\ell]$ by the edge $(A_h \cup [\ell], B_h \cup [\ell])$.
 Let $\mathcal{P}^{i+1}[\ell]$ be the collection of all such $P[\ell]$ over all $P \in \mathcal{P}^{i+1}$.
  Our construction then iterates through the edges $(A \cup [i], B \cup [i])$ in $P[i]$
  from $[i]$ to $[i+1]$ (where $(A, B)$ is the corresponding edge in $P$), processing each edge as follows:
\begin{enumerate}
\item  \label{step:supset}  If $B \subset A$ then increase $\mu_{B \cup [i], A \cup [i]}$ by $\epsilon$.
\item \label{step:subset} Else:
Let $t\le i$ be such that  $\mu_{A \cup [t], B \cup [t] } \geq \epsilon$.
\begin{enumerate}
\item If $A \cup [i] \subset B \cup [t]$ then   decrease $\mu_{A \cup [t], B \cup [t] }$ by $\epsilon$.
\item Else:
\begin{enumerate}
    \item
Decrease  $\mu_{A \cup [t], B \cup [t] }$ by $\epsilon$.
\item
 Increase $\sigma_{B \cup [t],A \cup [i]}$ by $\epsilon$.
 \item
If $(A \cup [i]) \cap (B \cup [t]) \neq  A \cup [t]$ then increase $\mu_{A \cup [t], (A \cup [i]) \cap (B \cup [t])}$ by $\epsilon$.
 \end{enumerate}
\end{enumerate}
\end{enumerate}

The restorative process iterates over the paths in $\mathcal{P}^{i+1}[i+1]$,
and then iterates over the edges of each $P[i+1]$ in $ \mathcal{P}^{i+1}[i+1]$.
An edge $(A \cup [i+1], B \cup [i+1])$ in $P[i+1]$ (where $(A, B)$ is the corresponding edge in $P$) is processed as follows:
\begin{enumerate}
\item
If $B \subset A$ then
\begin{enumerate}
\item
Decrease  $\mu_{B \cup [i], A \cup [i] }$ by $\epsilon$.
\item
Increase  $\sigma_{A \cup [i],B \cup [i+1]}$  by $\epsilon$.
\end{enumerate}
\item  \label{step:supsetback}  Else increase $\mu_{A \cup [i+1], B \cup [i+1]}$ by $\epsilon$.
\end{enumerate}

   We will show in Lemma \ref{lem:edges-simple} that the outer loop of the forward path process maintains
   the following forward path loop invariant:
\begin{enumerate}
\item The excess  at the vertex $[i]$ in  $D$ is reduced by $\epsilon$ for each path processed.
\item The excess  at the vertex $[i+1]$ in $D$ is increased by $\epsilon$ for each path processed.
\item
The excess at every vertex in $D$, besides $\emptyset$ and $[i]$ and $[i+1]$  is zero.
\item For every  $j\in [k]$,
if  $Y_j \cup [i] \neq X_j \cup [i]$ then  $\sum_{0 \leq t \leq i}   \mu_{[t] \cup X_j, [t] \cup Y_j} \ge \delta_i - \epsilon \sum_{h=1}^j \mathbb{1}_j^{i+1}(h)$,
where  $\mathbb{1}_j^{i+1}(h)$ it an indicator function that is $1$ if
path $P$ contains the edge $(X_h, Y_h)$ in $G$.
Let us call the right-hand side of this constraint the remaining capacity for difference constraint $j$.
    \item Each variable is an integer multiple of $\epsilon$.
  \end{enumerate}

\begin{lemma} The  forward path process maintains the forward path loop invariant\label{lem:edges-simple}
\end{lemma}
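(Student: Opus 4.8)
The plan is to fix a single path $P \in \mathcal{P}^{i+1}$, assume the forward path loop invariant holds after all previously processed paths (and, for the first path of the iteration, that the outer loop invariant holds), and show that processing $P$ restores the invariant. Since every elementary operation in the forward path process changes exactly one $\mu$ or $\sigma$ variable by $\pm\epsilon$ starting from integer multiples of $\epsilon$, item~(5) is immediate, and items~(1)--(3) and~(4) are what require work. The whole argument reduces to a local computation of how the processing of a single lifted edge perturbs $\flow(\cdot)$, followed by telescoping along $P[i]$.

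The heart of the plan is this local claim: for each edge $(A\cup[i], B\cup[i])$ of $P[i]$, processed in order from $[i]$ to $[i+1]$, the update deposits a net $-\epsilon$ at the tail $A\cup[i]$, a net $+\epsilon$ at the head $B\cup[i]$, and $0$ at every other vertex. I would use that $\mu_{X,Y}$ (for $X\subsetneq Y$) contributes $+\mu_{X,Y}$ to $\flow(X)$ and $-\mu_{X,Y}$ to $\flow(Y)$, while $\sigma_{I,J}$ contributes $+\sigma_{I,J}$ to each of $\flow(I\cap J)$ and $\flow(I\cup J)$ and $-\sigma_{I,J}$ to each of $\flow(I)$ and $\flow(J)$. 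Step~\ref{step:supset} (a down-edge, $B\subset A$) is a single $\mu$-increment and is immediate. For Step~\ref{step:subset}(a) one first records the set identity that $A\cup[i]\subset B\cup[t]$ forces $\{t+1,\dots,i\}\subseteq B$, hence $B\cup[t]=B\cup[i]$, so the single decrement of $\mu_{A\cup[t],B\cup[t]}$ puts the $+\epsilon$ at the head. The delicate case is Step~\ref{step:subset}(b): here I would show that the decrement of $\mu_{A\cup[t],B\cup[t]}$, the new hyperedge $\sigma_{B\cup[t],A\cup[i]}$, and the bridging increment of $\mu_{A\cup[t],W}$ with $W:=(A\cup[i])\cap(B\cup[t])$ cancel at all three auxiliary vertices $A\cup[t]$, $B\cup[t]$, and $W$ (using $I\cup J=(B\cup[t])\cup(A\cup[i])=B\cup[i]$), leaving exactly $-\epsilon$ at $A\cup[i]$ and $+\epsilon$ at $B\cup[i]$. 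Along the way I would check the side conditions that make each update legal; in particular, non-degeneracy of the processed edge guarantees an element of $Y_j\setminus X_j$ exceeding $i$, which yields $B\cup[t]\not\subseteq A\cup[i]$ and hence the incomparability $B\cup[t]\perp A\cup[i]$ that the hyperedge requires.

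Granting the local claim, items~(1)--(3) follow by telescoping: consecutive non-deleted edges of $P[i]$ share an endpoint (the head of one is the tail of the next), so all interior contributions cancel and only the endpoints $[i]$ and $[i+1]$ retain a net change of $-\epsilon$ and $+\epsilon$ respectively. For item~(4), I would track, for each constraint $j$, only the updates touching a variable of the form $\mu_{[t]\cup X_j,[t]\cup Y_j}$. The same non-degeneracy argument shows $W\not\supseteq Y_j$, so the bridging variable $\mu_{A\cup[t],W}$ is never of this form; hence each traversal of $(X_j,Y_j)$ decrements $\sum_{0\le t\le i}\mu_{[t]\cup X_j,[t]\cup Y_j}$ by exactly $\epsilon$, matching the claimed right-hand side $\delta_j-\epsilon\sum_{h=1}^{j}\mathbb{1}^{i+1}_j(h)$. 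The positivity of that same quantity just before an up-edge is processed is what guarantees that a level $t\le i$ with $\mu_{A\cup[t],B\cup[t]}\ge\epsilon$ exists, so the process is well defined.

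I expect the main obstacle to be ensuring the local computation is globally clean for up-edges, i.e.\ that the $-\epsilon$ is charged to the tail $A\cup[i]$ rather than to a lower-level vertex $A\cup[t]$. The single decrement in Step~\ref{step:subset}(a) transparently does this when the consumed capacity sits at level $t=i$, so that $A\cup[t]=A\cup[i]$; when the remaining capacity resides at a strictly lower level, it is the hyperedge $\sigma_{B\cup[t],A\cup[i]}$ in Step~\ref{step:subset}(b) that performs the level-bridging and relocates the deficit. The technical crux is therefore to couple the choice of $t$ with the set identities above and with the outer-loop invariant describing where the restorative process of earlier iterations deposited each constraint's capacity, so as to certify that Step~\ref{step:subset}(a) is invoked only in the favorable configuration while Step~\ref{step:subset}(b) covers all remaining cases; once this is pinned down, the balancing of every auxiliary vertex and the telescoping are routine.
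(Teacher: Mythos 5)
Your strategy is the same as the paper's: reduce the lemma to a per-edge local claim---processing a lifted edge $(A\cup[i],B\cup[i])$ moves $\epsilon$ of excess from the tail to the head and leaves every other vertex's excess unchanged, while decrementing the remaining capacity of constraint $h$ exactly when $(A,B)=(X_h,Y_h)$---verify it by the sign rules for the $\mu$ and $\sigma$ contributions to $\flow$, and then telescope along $P[i]$ for items (1)--(3) while bookkeeping the capacity sum for item (4). Your handling of the down-edge, of Step 2(b) (including the check that an element of $B\setminus A$ above $i$ yields $B\cup[t]\perp A\cup[i]$, a point the paper leaves implicit), the identity $(B\cup[t])\cup(A\cup[i])=B\cup[i]$, and the observation that the bridging variable $\mu_{A\cup[t],(A\cup[i])\cap(B\cup[t])}$ does not collide with constraint $j$'s capacity variables all match, and in places sharpen, the paper's bullet-by-bullet vertex accounting.

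However, you explicitly defer the one step that is genuinely nontrivial, and so the proposal as written does not prove the lemma. In Step 2(a) the decrement of $\mu_{A\cup[t],B\cup[t]}$ deposits $-\epsilon$ at $A\cup[t]$, and the local claim needs this to coincide with the tail $A\cup[i]$; you correctly identify this as ``the technical crux'' and then leave it ``to be pinned down,'' supplying no argument. The paper closes exactly this hole with the claim that the case-(a) condition $A\cup[i]\subsetneq B\cup[t]$ forces $t=i$, whence $A\cup[t]=A\cup[i]$ and the charge lands correctly. Your weaker deduction $B\cup[t]=B\cup[i]$ is not enough: the condition $A\cup[i]\subsetneq B\cup[t]$ only forces $\{t+1,\dots,i\}\subseteq Y_h$, so if capacity for constraint $h$ were sitting at a level $t<i$ with $\{t+1,\dots,i\}\subseteq Y_h$ but $\{t+1,\dots,i\}\not\subseteq X_h$, case (a) would fire with $A\cup[t]\neq A\cup[i]$ and drive the excess at $A\cup[t]$ negative---precisely the failure mode you flag. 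The missing content is therefore an argument, coupling the choice of $t$ with the outer-loop and restorative-process invariants governing at which levels each constraint's capacity resides, that rules out this configuration (equivalently, a proof of the paper's one-line ``$t=i$'' assertion, which the paper itself states without justification). Until that is supplied, items (1)--(4) are established only for down-edges and for up-edges falling into case (b) or into case (a) with $t=i$, and the lemma remains open at the delicate case.
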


\begin{proof}
First, note that at least one edge has to be processed on every path $P[i]$.  The first edge processed will reduce the excess at $[i]$.  The last edge processed will increase the excess at $[i+1]$.

It is then sufficient to  show that when an edge $(A \cup [i] , B \cup [i])$ in a path $Q_j^{i+1}[i]$ is processed
the excess of $A \cup [i]$ decreases by $\epsilon$,
      the excess of $B \cup [i]$ increases by $\epsilon$,  the excess of all other vertices remain unchanged, and the remaining capacity of a difference constraint $h$
      decreases by $\epsilon$
      if and only if $A  = X_h$ and $B = Y_h$.   If
$B \subset A$ then $(A, B)$ is a $\mu$ edge then what we want to prove is obvious.
So consider the case that $A \subset B$.
In this case we know  there is a difference constraint $h$ where $A=X_h$ and
$B=Y_h$, and thus $\mathbb{1}_j^{i+1}(h) =1$.
The existence of such a $t$ follows from the  loop   invariant and the fact the flow to $\{i+1\}$
in $D'_S$ uses difference constraint $h$ to an extent at most $\delta_h$.
If $A \cup [i] \subsetneq B \cup [t]$ then
this  implies $i=t$, thus the edge invariant holds and the remaining capacity for
difference constraint $h$ decreases by a most $\epsilon$.

Otherwise, note that it must be the case that  $i > t$. Further note that
$A \cup [t] \subset (A \cup [i]) \cap (B \cup [t])$. So consider how the various nodes who are effected.  \begin{itemize}
\item $A\cup [i]$:  Increasing $\sigma_{B \cup [t],A \cup [i]}$ decreases the excess from $\epsilon$ to $0$.
\item $A \cup [t]$: The excess decreases by $\epsilon$ due to the
decrease of $\mu_{A \cup [t], B\cup [t]}$ is decreased.
If
$(A \cup [i]) \cap (B \cup [t]) = A \cup [t]$ this  decrease is canceled
by the increase  of $\sigma_{B \cup [t], A \cup [i]}$, and otherwise it is canceled
by the increase of
$\mu_{A \cup [t],(A \cup [i]) \cap (B \cup [t])}$.
\item $(A \cup [i]) \cap (B \cup [t])$: The excess increases from the increase
of $\sigma_{B \cup [t], A \cup [i]}$. If
$(A \cup [i]) \cap (B \cup [t]) = A \cup [t]$ this increase is canceled
by the decrease of $\mu_{A \cup [t], B \cup [t]}$, and otherwise it is canceled
by the increase of
$\mu_{A \cup [t],(A \cup [i]) \cap (B \cup [t])}$.
\item $B \cup [t]$: The excess decreases by $\epsilon$ from the increase of   $\sigma_{B \cup [t],A \cup [i]}$ and increases by $\epsilon$ from the decrease of $\mu_{A \cup [t], B \cup [t] }$, resulting in the excess staying at zero.
\item $B \cup [i]$: Increasing $\sigma_{B \cup [t],A \cup [i]}$ increases the excess from zero to $\epsilon$.
\end{itemize} \hfill $\square$
\end{proof}

Notice that upon termination of the forward path process,
the forward path loop invariant implies that outer loop invariant is
satisfied for $i+1$, with the exception of the third invariant.  
Lemma \ref{lem:edges-simple-backward}
   shows the the restorative process makes this third invariant true,
   without affecting the other invariants.

\begin{lemma}
After the restorative  process the outer loop invariant holds for $i+1$.  \label{lem:edges-simple-backward}
\end{lemma}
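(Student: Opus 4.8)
The plan is to verify the four outer loop invariants for $i+1$ one at a time, reusing the forward path loop invariant of Lemma~\ref{lem:edges-simple} to describe the state that enters the restorative process. Invariant~4 (integrality) is immediate, since every operation of the restorative process adjusts a single $\mu$ or $\sigma$ variable by exactly $\pm\epsilon$ and the incoming state is already an integer multiple of $\epsilon$ everywhere.

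For the excess invariants~1 and~2, the central observation is that the lifted walk $P[i+1]$ is \emph{closed}: because $A_1=\emptyset$ and $B_u=\{i+1\}$, it starts at $A_1\cup[i+1]=[i+1]$ and ends at $B_u\cup[i+1]=[i+1]$. I would first prove a per-edge identity analogous to Lemma~\ref{lem:edges-simple}: processing an edge $(A\cup[i+1],B\cup[i+1])$ of $P[i+1]$ changes $\flow(A\cup[i+1])$ by $+\epsilon$, changes $\flow(B\cup[i+1])$ by $-\epsilon$, and leaves every other excess unchanged. In the branch $A\subset B$ this is direct, as the lone increase of $\mu_{A\cup[i+1],B\cup[i+1]}$ produces exactly this pair. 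In the branch $B\subset A$ one sets $I=A\cup[i]$ and $J=B\cup[i+1]$, checks that $I\cup J=A\cup[i+1]$, computes $I\cap J$, and verifies that the decrease of $\mu_{B\cup[i],A\cup[i]}$ and the increase of the hyperedge $\sigma_{A\cup[i],B\cup[i+1]}$ cancel at the level-$i$ vertices, again leaving only the $+\epsilon/-\epsilon$ pair at level $i+1$. Summing these contributions along $P[i+1]$, consecutive edges share an endpoint (deleted edges contract to a single vertex and do not interrupt the chain), so the $\pm\epsilon$ terms telescope and the restorative process leaves every excess unchanged. Since the forward path process had already moved one unit of excess in total from $[i]$ to $[i+1]$, after both processes the excess is $1$ at $[i+1]$ and $0$ at every vertex other than $\emptyset$ and $[i+1]$, establishing invariants~1 and~2.

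For invariant~3 I fix a difference constraint $j$ and distinguish two cases. If $Y_j\cup[i+1]=X_j\cup[i+1]$ the constraint is vacuous for $i+1$ and there is nothing to check. Otherwise $Y_j\cup[i+1]\neq X_j\cup[i+1]$, and since $[i]\subseteq[i+1]$ this forces $Y_j\cup[i]\neq X_j\cup[i]$, so the edge $(X_j,Y_j)$ survives in both $P[i]$ and $P[i+1]$ and is used by the same set of paths at both levels. For each such path the forward process (step~\ref{step:subset}) lowered $\sum_{0\le t\le i}\mu_{[t]\cup X_j,[t]\cup Y_j}$ by exactly $\epsilon$, while step~\ref{step:supsetback} of the restorative process raises the single level-$(i+1)$ variable $\mu_{X_j\cup[i+1],Y_j\cup[i+1]}$ by exactly $\epsilon$. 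As this variable was $0$ at the start of the iteration and no other operation of either process alters the capacity of constraint $j$, the full sum $\sum_{0\le t\le i+1}\mu_{[t]\cup X_j,[t]\cup Y_j}$ is restored to $\delta_j$, giving invariant~3.

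The main obstacle is the per-edge excess identity in the branch $B\subset A$: one must determine $I\cap J$ for the hyperedge $\sigma_{A\cup[i],B\cup[i+1]}$, confirm that $A\cup[i]\perp B\cup[i+1]$ so that the hyperedge is admissible, and check that its four incidences together with the reversed edge $\mu_{B\cup[i],A\cup[i]}$ exactly balance the two level-$i$ vertices while creating the single $+\epsilon/-\epsilon$ pair at level $i+1$. The analogous care is needed to confirm that step~\ref{step:subset} of the forward process borrows precisely $\epsilon$ (and not more) from constraint $j$; in particular, that the auxiliary $\mu$ raised inside step~\ref{step:subset} is never the difference-constraint variable $\mu_{[t]\cup X_j,[t]\cup Y_j}$ itself, which follows from the incomparability $A\cup[i]\perp B\cup[t]$. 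Once these incidence computations are settled, the closed-walk telescoping and the capacity bookkeeping are routine.
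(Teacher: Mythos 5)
Your proposal is correct and takes essentially the same route as the paper's proof: the key observation that $P[i+1]$ is a closed walk beginning and ending at $[i+1]$, the per-edge excess identity ($+\epsilon$ at $A\cup[i+1]$, $-\epsilon$ at $B\cup[i+1]$, with the hyperedge $\sigma_{A\cup[i],\,B\cup[i+1]}$ having union $A\cup[i+1]$ and intersection $B\cup[i]$ cancelling against the decrease of $\mu_{B\cup[i],\,A\cup[i]}$), and the restoration of each surviving difference constraint's capacity via the level-$(i+1)$ variable $\mu_{X_h\cup[i+1],\,Y_h\cup[i+1]}$ are exactly the paper's argument. Your added explicit checks (telescoping along the closed walk, vacuity of collapsed constraints, admissibility of the $\sigma$ variable, and pairing each restorative decrement with the matching forward increment) merely spell out bookkeeping the paper leaves implicit.
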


\begin{proof}
First note that if an $P_j^{i+1}[i+1]$ contains no edges, then
for each difference constraint $h$ where  $(X_h, Y_h)$ is an edge in $P_j^{i+1}$
it is the case that $X_h \cup [i+1] = Y_h \cup [i+1]$, and thus
the remaining  capacity for this difference does not need to be restored.
So consider a path $P_j^{i+1}[i+1]$ that contains a positive number of edges.
Note that $P_j^{i+1}[i+1]$ is a closed loop
as both the first vertex (namely $\emptyset$) and last vertex (namely $\{ i+1\}$) in $P_j^{i+1}$ are subsets of $[i+1]$.
Thus it will be sufficient to argue that for each edge $(A \cup [i+1], B \cup [i+1]) $
in $P_j^{i+1}[i+1]$
it is the case that when when this edge is processed
the excess of $A \cup [i+1]$ increases by $\epsilon$, the excess of
$B \cup [i+1]$ decreases by $\epsilon$, the excess of all other nodes
does not change, and  if there is a difference constraint $h$ where
$A= X_h$, $B = Y_h$, and
$X_h \cup [i+1] \ne Y_h \cup [i+1]$ then the remaining capacity for
this difference constraint will increase by $\epsilon$.
If $A \subset B$ then
there is a difference constraint $h$ where
$A= X_h$, $B = Y_h$; Further, if
$X_h \cup [i+1] \ne Y_h \cup [i+1]$ then the remaining capacity for
this difference constraint will increase by $\epsilon$.
The invariants about the excesses obvious hold in this case.

So now let us consider the effects when $B \subset A$.
Note that in this case $(B \cup [i+1]) \setminus (A \cup [i]) = \{i+1\}$.
The effects on the excesses of various nodes is:
\begin{itemize}
\item $B\cup [i]$:  Its excess is decreased by $\epsilon$ due to the decrease of  $\mu_{B \cup [i], A \cup [i] }$, and its  excess is increased by $\epsilon$ due to the increase of $\sigma_{A \cup [i],B \cup [i+1]}$.
Note that the decrease of $\mu_{B \cup [i], A \cup [i] }$ here negates
the  increase of $\mu_{A \cup [i], B \cup [i] }$ when processing edge $(A \cup [i], B \cup [i])$ in $P_j^{i+1}[i]$ in the
forward path process.
\item $B \cup [i+1]$: Its excess decreases by $\epsilon$ due to the
increase of $\sigma_{A \cup [i],B \cup [i+1]}$.
\item $A \cup [i]$: Its excess is decreased by $\epsilon$
due to the increase of   $\sigma_{A \cup [i],B \cup [i+1]}$, and  increases
by $\epsilon$ due to the decrease of  $\mu_{B \cup [i], A \cup [i] }$.
\item $A \cup [i+1]$: Its excess increases by $\epsilon$ due
to the increase of $\sigma_{A \cup [i],B \cup [i+1]}$.
\end{itemize}
\hfill $\square$
\end{proof}

\section{The Dual Project Dual Approach for Strongly Connected Components}
\label{sect:specialcasesSCC}

For convenience we rewrite $D_{SCC}$ as:

\begin{equation}
\begin{array}{rrclcl}
\displaystyle \min & \multicolumn{3}{l}{ \sum_{i \in [k]} c_i \cdot \delta_i } \\
\textrm{s.t.} & D[\CC_j] & &   \qquad \forall j \in [h]
\end{array} \nonumber
\end{equation}
where $D[U]$ are the constraints
\begin{align*}
     \flow^{\sigma,\mu}(Z,U) \ge  - \flow^{\delta}(Z, U)  &&    \forall Z \quad \emptyset \subsetneq Z \subsetneq U \  \\
  \flow^{\sigma,\mu}(U,U)
     \ge 1 - \flow^{\delta}(U,U)& &  \\
\end{align*}
where we define $\flow^{\sigma,\mu}$ as follows:
\begin{align*}
    \flow^{\sigma,\mu} (Z,U)&:= &
  \sum_{\substack{ I \cup J \subseteq U\\   {I \perp J } \\{I\cap J= Z}}}
  \sigma_{I,J} + \sum_{\substack{{I \perp J } \\
    {I\cup J= Z}}} \sigma_{I,J} - \sum_{\substack{ J \in U \\  J \perp Z}} \sigma_{Z,J} \\
    &&-\sum_{ X\subsetneq Z} \mu_{X,Z} + \sum_{ Z\subsetneq  Y \subseteq U} \mu_{Z,Y} \\
\end{align*}
and $\flow^{\delta}$ as follows:
\begin{align*}
    \flow^{\delta}(Z,U)&:= &\sum_{\substack{ i \in [k] \\  Z= Y_i \cap U}} \delta_i - \sum_{\substack{ i \in [k] \\  Z= X_i \cap U}} \delta_i  \\
\end{align*}

\begin{lemma}
A setting of the $\delta$ variables in the linear program
$D$ can be extended to a feasible solution for $D$
if and only if this same setting of the $\delta$ variables
can be extended to a feasible solution in $D_{SCC}$.
\end{lemma}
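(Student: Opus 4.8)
The plan is to prove the two inclusions separately, since the statement is exactly the assertion that $D_{SCC}$ is the projection of the feasible region of $D$ onto the $\delta$-coordinates.

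For the forward direction (a $D$-feasible $\delta$ extends inside $D_{SCC}$) I would \emph{contract} a feasible $(\delta,\sigma,\mu)$ of $D$ into each component. Fix $U=\CC_j$ and use the map $S\mapsto S\cap U$. Define contracted variables $\tilde\sigma_{I,J}:=\sum\sigma_{I^*,J^*}$ over all $I^*\incomp J^*$ with $I^*\cap U=I$ and $J^*\cap U=J$ (for $I\incomp J$, $I\cup J\subseteq U$), and $\tilde\mu_{X,Y}$ analogously (for $X\subsetneq Y\subseteq U$). The first point is that any original hyperedge $\sigma_{I^*,J^*}$ whose projections $I^*\cap U,J^*\cap U$ become comparable contributes zero net flow after contraction, since its four endpoints collapse into cancelling pairs; likewise any $\mu$-edge that projects to a loop vanishes. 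Hence these are legal $D_{SCC}$ variables, and the key identity $\flow(Z,\CC_j)=\sum_{S:\,S\cap\CC_j=Z}\flow(S)$ holds, which I would check term-by-term. The three families of $D[\CC_j]$ then follow by aggregating the constraints of $D$: every $S$ with $S\cap\CC_j=\CC_j$ other than $[n]$ is interior (excess $\ge0$) while $\flow([n])\ge1$, giving $\flow(\CC_j,\CC_j)\ge1$; every $S\ne\emptyset$ with $S\cap\CC_j=\emptyset$ is interior, giving $\flow(\emptyset,\CC_j)\ge-1$; and for $\emptyset\subsetneq Z\subsetneq\CC_j$ every preimage $S$ is interior, giving $\flow(Z,\CC_j)\ge0$.

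For the converse (a $D_{SCC}$-feasible $\delta$ extends inside $D$) I would process the components in the topological order $\CC_1,\dots,\CC_h$, writing $U_j=\CC_1\cup\cdots\cup\CC_j$ so that $U_0=\emptyset$ and $U_h=[n]$. By the constraints of $D[\CC_j]$, the per-component solution is a unit flow from $\emptyset$ to $\CC_j$ inside the sublattice $2^{\CC_j}$. I would \emph{lift} it to the interval $[U_{j-1},U_j]$ of $\mathcal L$ via $Z\mapsto Z\cup U_{j-1}$, sending $\mu^{(j)}_{X,Y}\mapsto\mu_{X\cup U_{j-1},\,Y\cup U_{j-1}}$ and $\sigma^{(j)}_{I,J}\mapsto\sigma_{I\cup U_{j-1},\,J\cup U_{j-1}}$. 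Since $U_{j-1}$ is disjoint from $\CC_j$, the relations $X\subsetneq Y$ and $I\incomp J$ are preserved, so these are legal edges and the lifted flow routes one unit from $U_{j-1}$ to $U_j$. Concatenating over $j$ telescopes to a unit flow from $\emptyset$ to $[n]$, with excess $-1$ at $\emptyset$, $+1$ at $[n]$, and $0$ at every seam $U_j$ and every interior vertex.

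The difficulty, and what I expect to be the main obstacle, is that the lift of component $j$ uses, for each constraint $i$ acting nontrivially on $\CC_j$, a \emph{shifted} copy of the $\delta_i$-edge from $A_{i,j}:=(X_i\cap\CC_j)\cup U_{j-1}$ to $B_{i,j}:=(Y_i\cap\CC_j)\cup U_{j-1}$, whereas $D$ contains only the single true edge $X_i\to Y_i$; reconciling the shifted copies with the one true edge is the crux. Here I would use that topological order forces $\max\{\mathrm{comp}(u):u\in X_i\}\le\min\{\mathrm{comp}(v):v\in Y_i\setminus X_i\}$, which yields, across the active components $j_1<\cdots<j_m$ for $i$ (those $\CC_j$ with $Y_i\cap\CC_j\ne X_i\cap\CC_j$), the chain of inclusions $X_i\subseteq A_{i,j_1}$, $B_{i,j_\ell}\subseteq A_{i,j_{\ell+1}}$, and $Y_i\subseteq B_{i,j_m}$. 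The first $m$ links can be repaired by free downward edges $\mu_{X_i,A_{i,j_1}}$ and $\mu_{B_{i,j_\ell},A_{i,j_{\ell+1}}}$ carrying exactly $\delta_i$ units, which cancel the corresponding defects and disturb no other excess. The delicate link is the last one: closing $B_{i,j_m}$ back onto the true sink $Y_i$ requires moving flow \emph{upward}, which a $\mu$-edge cannot do, so one must spend a $\sigma$-hyperedge and argue, exactly as in the restorative process of Lemma~\ref{lem:simpleliftB}, that the auxiliary corner it creates is absorbed and that, against a base flow with no slack at interior vertices, the combined corrections over all $i$ keep every interior excess nonnegative. The remaining verifications parallel Lemmas~\ref{lem:edges-simple} and~\ref{lem:edges-simple-backward} and are routine bookkeeping.
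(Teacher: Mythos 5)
Your forward direction is correct, and it is a legitimate primal alternative to the paper's argument: the paper proves this direction in the contrapositive via Farkas' lemma, lifting an infeasibility certificate for $D[\CC_j]$ (a polymatroid $F_j$ on $2^{\CC_j}$) to the full lattice by $F(Z) = F_j(Z \cap \CC_j)$. Your contraction $S \mapsto S \cap \CC_j$ is exactly the primal counterpart of that lift, and your identity $\flow(Z,\CC_j) = \sum_{S : S \cap \CC_j = Z} \flow(S)$ does hold term by term, because $S \mapsto S \cap \CC_j$ is a lattice homomorphism, so every $\sigma$-hyperedge or $\mu$-edge whose endpoints collapse to comparable or equal sets contributes zero net excess after contraction. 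What your version buys is constructiveness and self-containedness; what the paper's buys is brevity.

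The converse, however, has a genuine gap at precisely the point you flag as delicate, and your proposed fix does not work. After lifting component $j$'s flow by $Z \mapsto Z \cup U_{j-1}$ (with $U_j = \CC_1 \cup \cdots \cup \CC_j$) and substituting the single true edge $X_i \to Y_i$ for the $m$ shifted copies, the residual defect is $+\delta_i$ at each $A_{i,j_\ell}$ and at $Y_i$, and $-\delta_i$ at each $B_{i,j_\ell}$ and at $X_i$; your $\mu$-patches repair all links except the last, leaving a surplus at $Y_i$ and a deficit at $B_{i,j_m} \supsetneq Y_i$ (note $B_{i,j_m} \supseteq U_{j_m-1}$, which is in general far larger than $Y_i$). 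A single $\sigma$-hyperedge cannot discharge this: for instance $\sigma_{Y_i,\,U_{j_m-1}}$ does deliver $+\delta_i$ at $Y_i \cup U_{j_m-1} = B_{i,j_m}$ and $-\delta_i$ at $Y_i$, but it simultaneously creates $-\delta_i$ at the seam $U_{j_m-1}$, where the base flow has \emph{zero} slack, and $+\delta_i$ at $Y_i \cap U_{j_m-1}$ --- i.e., it recreates the identical upward-routing problem one level down, and the recursion never closes. The analogy with the restorative process of Lemma~\ref{lem:simpleliftB} fails because there the patched path is a closed loop (both endpoints are subsets of $[i+1]$), which has no analogue here. The paper's construction avoids the upward move entirely: for each active component $j$ it spends the hyperedge $\sigma_{(X_i \cap \CC_j) \cup \CC^*,\; Y_i \cap (\CC^* \cup \CC_j)}$ with $\CC^* = U_{j-1}$, whose \emph{union} output $\CC^* \cup (Y_i \cap \CC_j) = B_{i,j}$ realizes the lifted projected edge inside the component flow, while its \emph{intersection} output is a ``down push'' from $Y_i \cap U_j$ to a subset of $Y_i$. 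These down pushes live inside the interval $[X_i, Y_i]$, chain together via free $\mu$-edges, and the topmost one starts exactly at $Y_i \cap U_{j_m} = Y_i$, where the true edge's injection feeds it --- so no flow ever has to move upward, and all excesses cancel exactly rather than needing absorption by nonexistent slack. Your outline would have to be reorganized along these lines; as written, the deferred ``routine bookkeeping'' cannot be completed.
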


\begin{proof}
Assume that a setting of the $\delta$ variables is not
feasible for $D_{SCC}$.
Then there must exist a $j$ such
that $D[\CC_j]$ is infeasible.
By Farkas' lemma, if $D[\CC_j]$
is not feasible, then there is polymatroid function
$F_j$
on the lattice of subsets of $\CC_j$ such that
$$\sum_{Z \subseteq \CC_j}  \flow^{\delta}(Z,\CC_j) \cdot F_{j}(Z) < F_j(\CC_j)$$
We now use $F_{j}$ to define a poly-matroid
function $F$ on
the full lattice as follows:
$$ F(Z) = F_{j}(Z \cap \CC_j)$$
Note that then
$\sum_{Z \subseteq [n]}  \flow^{\delta}(Z,\CC_j) F(Z)< F([n])$. 
Thus by Farkas' lemma $D$ must be infeasible.

Fix a collection of variables $\delta$, $\mu^{SCC}$ and $\sigma^{SCC}$ variables that are feasible for $D_{SCC}$. We want to show that
$\delta$ is feasible for $D$ by setting $\sigma$ and $\mu$ appropriately.
The proof is by induction on the number of connected
components of the dependency graph.
The induction invariant is that after $j$ iterations
a flow of one has been routed to the set
$\cup_{i=1}^j \CC_i$ in the lattice.
Initially, set all $\sigma$ and $\mu$ variables to $0$.
Say that a unit flow has reached $ \CC^* := \cup_{i=1}^{j-1}\CC_i$ inductively.  Consider iteration $j$.
We set the now variables as follows.

For each variable $\sigma^{SCC}_{X,Y}$ in $D[\CC_j]$ where $X \subsetneq Y \subseteq \CC_j$, set  the variable $\sigma^D_{X \cup \CC^*,Y \cup \CC^*}$ in $D$ to $\sigma^{SCC}_{X,Y}$.
  For each variable $\mu^{SCC}_{X,Y}$ in $D[\CC_j]$ where $X, Y \subseteq \CC_j$ and $X \perp Y$,  set  the variable $\mu^D_{X \cup \CC^*,Y \cup \CC^*}$ in $D$ to $ \mu^{SCC}_{X,Y}$.
The value of the  $\delta_{i}$ are the same in $D$ and $D_{SCC}$.

For each  $i \in [k]$ let $m(i)$ be a real number
such that for strongly connected components $\CC_j $ with $j < m(i)$
it is the case that $\CC_j \cap (Y_i \setminus X_i) = \emptyset$,
for strongly connected components $\CC_j $ with $j > m(i)$
it is the case that $\CC_j \cap X_i = \emptyset$, and if
$m(i)$ is an integer then it is the case that
 $\CC_{m(i)} \cap Y_i \setminus X_i \ne \emptyset$ and
 $\CC_{m(i)} \cap X_i \ne \emptyset$.
 For each $j > m(i)$ such that $(Y_i \setminus X_i) \cap \CC_j \ne \emptyset$ then
 $\sigma^{D}_{\CC^*, Y_i \cap (\CC^* \cup \CC_j)} = \delta_i$.
 This pushes $\delta_i$ units of flow from $\CC^*$ to $\CC^* \cup (Y_i \cap \CC_j)$ in $D$,
 essentially replacing the $\delta_i$ flow in $D[\CC_j]$, and
 pushes $\delta_i$ units of flow from $Y_i \cap (\CC^* \cup \CC_j)$ to $\CC^* \cap Y_i$ in $D$
  (call this a down push).
And if $ m(i)$ is an integer then
 $\sigma^{D}_{\CC^*  \cup (\CC_j \cap X_i), Y_i \cap (\CC^* \cup \CC_j)} = \delta_i$.
  This pushes $\delta_i$ units of flow from $\CC^* \cup (\CC_j \cap X_i)$ to $\CC^* \cup (Y_i \cap \CC_j)$ in $D$,
 essentially replacing the $\delta_i$ flow in $D[\CC_j]$,  and
 pushes $\delta_i$ units of flow from $Y_i \cap (\CC^* \cup \CC_j)$ to $(\CC^* \cap X_i) \cup  (Y_i \cap \CC_j)$ in $D$
  (call this a down push).

 For each difference constraint $i \in [k]$ we set some $\mu$ variables in $D$ as follows.
Let the down pushes of flow  in $D$
  involving difference constraint $i$ that have been constructed so far be: $B_\ell$ to $A_\ell$,
  $B_{\ell-1}$ to $A_{\ell-1}$, $\ldots$, $B_1$ to $A_1$
  such that
  $$X_i \subseteq A_1 \subseteq B_1 \subseteq A_2 \subseteq B_2 \subseteq \ldots \subseteq A_\ell \subseteq B_\ell = Y_i $$
We then connect these down pushes up by setting $\mu^D_{X_1, A_1} = \delta_i$ if $X_1 \ne A_1$,
and setting each $\mu^D_{B_i, A_{i+1}} = \delta_i$ if $B_i \ne A_{i+1}$ and $i \in [\ell-1]$.
The down pushes and these $\mu$ variables together route $\delta_i$ units of flow from $Y_i$ to $X_i$ in $D$.
\end{proof}

\section{Hardness of Computing Normal Bounds}
\label{sect:normalhard}

\newcommand{\po}{\Delta}
\newcommand{\cS}{\mathcal{S}}
\newcommand{\lpn}{\textsf{LP}_{\textrm{normal}}}

This section is devoted to proving the following theorem.

\begin{theorem}
    \label{thm:normal-hardness}
    The problem of maximizing the weighted coverage function value subject to difference constraints, $\DC[N_n]$, cannot be solved in polynomial time unless P = NP.
\end{theorem}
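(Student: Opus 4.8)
The plan is to first re-express $\DC[N_n]$ as an explicit linear program $\lpn$ over the weights $\lambda_V$ of the bipartite representation of a weighted coverage function, exactly as in the derivation of $P_S$ but without assuming the instance is simple. Writing $h(X)=\sum_{V:\, V\cap X\ne\emptyset}\lambda_V$ with $\lambda\ge 0$, the objective becomes $\sum_{\emptyset\ne V}\lambda_V$ and each difference constraint becomes $\sum_{V:\, X_i\cap V=\emptyset,\ V\cap Y_i\ne\emptyset}\lambda_V\le c_i$. This is a packing LP with only $k$ constraints but one variable per nonempty $V\subseteq[n]$; its dual is the covering LP $\min\sum_i c_i\delta_i$ subject to $\sum_{i:\, X_i\cap V=\emptyset,\ V\cap Y_i\ne\emptyset}\delta_i\ge 1$ for every nonempty $V$, and by strong duality the two share the common optimum $\DC[N_n]$. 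Thus computing the normal bound is the same as solving this exponentially described LP, whose combinatorial heart is the separation problem $\min_{\emptyset\ne V}\sum_{i:\, X_i\cap V=\emptyset,\ V\cap Y_i\ne\emptyset}\delta_i$.

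Next I would reduce an NP-hard problem to computing this optimum; I will use \textsc{Independent Set} on a graph $G=(W,E)$, presented through a set system $\cS$. The universe $[n]$ contains one element per vertex of $G$ together with a small number of auxiliary elements, and the difference constraints $(X_i,Y_i,c_i)$ are built so that a subset $V$ can carry weight cheaply in $\lpn$ precisely when the vertices it selects form an independent set; adjacency of a pair $\{u,v\}$ is encoded by constraints with $|X_i|\ge 1$, so that the conjunction $V\cap X_i=\emptyset$ and $V\cap Y_i\ne\emptyset$ fires exactly when $V$ picks up a forbidden configuration. Such constraints are necessarily \emph{non-simple}, which is consistent with, and in fact forced by, the fact (Lemma~\ref{lem:simplecut} and the polynomial-time solvability of $D'_S$) that simple instances are tractable: any hardness construction must exploit the cyclic, non-simple interaction among subsets of different sizes.

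To prove correctness I would argue a two-sided bound through LP duality, with threshold $\po$. For completeness (yes-instances), given a large independent set I exhibit an explicit feasible primal $\lambda$ — placing weight on the corresponding singletons and auxiliary sets — whose objective is at least $\po$. For soundness (no-instances), I exhibit a feasible assignment of the dual variables $\delta_i$ — that is, a fractional cover of all nonempty $V$ by the constraints they fire — whose cost $\sum_i c_i\delta_i$ is strictly below $\po$, certifying that the optimum is small. Comparing the two bounds yields a gap that is crossed exactly when $G$ has an independent set of the prescribed size, so an exact polynomial-time algorithm for $\DC[N_n]$ would decide \textsc{Independent Set}.

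The main obstacle is the soundness direction, and more precisely ruling out fractional cheating. Because a larger $V$ meets more of the $X_i$ and therefore fires \emph{fewer} constraints, the optimizer is tempted to place weight on large subsets that appear ``free,'' so one must ensure both that the LP stays bounded — every nonempty $V$ must fire at least one constraint — and that no fractional combination of large subsets beats the integral independent-set value. This is exactly the gap separating the polynomially solvable fractional relaxation from the NP-hard integral quantity, so the gadget must be engineered, and the dual cover $\delta$ constructed, so that the fractional optimum of $\lpn$ cannot exceed the integral target. Verifying this, rather than the routine LP-duality bookkeeping, is where the real work lies.
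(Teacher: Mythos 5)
You have correctly reconstructed the LP framework---$\DC[N_n]$ equals the packing LP over the coverage weights $\lambda_V$, whose dual is the covering LP $\min\sum_i c_i\delta_i$ over the polytope $\po(G)$---and this matches the paper's starting point exactly. But the heart of your proof is missing, and you say so yourself: you never construct the Independent Set gadget, and the soundness direction (exhibiting a cheap feasible dual cover on no-instances, i.e., showing the \emph{fractional} LP optimum does not exceed the integral target) is precisely the step you defer with ``where the real work lies.'' This is not routine bookkeeping: the quantity you are trying to make hard is itself the optimum of a linear program, so any direct value-gap encoding of an NP-hard integral problem must eliminate the integrality gap of its own fractional relaxation, and you give no construction or argument suggesting your gadget can do this. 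As written, the proposal is a plan whose critical step is unproven.

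The paper avoids this obstacle by relocating the hardness. It does not make the LP \emph{value} encode an NP-hard quantity; instead it proves (Theorem~\ref{thm:normal-separation}) that the \emph{membership} test $\hat\delta \notin \po(G)$ is NP-complete, via a reduction from Hitting Set in which deciding whether some covering constraint is violated by $\hat\delta$---that is, minimizing the constraint weight over all $W$---encodes the hitting set; the LP optimum plays no role there. Since hardness of membership does not in general imply hardness of optimization, the paper then needs a second, nontrivial step: the reduction is arranged so that $\lambda\hat\delta\in\po(G)$ for some $\lambda>1$, one defines the convex set $R = \{w \mid w\cdot(\delta-\hat\delta)>0 \ \forall \delta\in\po(G)\}$, shows $R\neq\emptyset$ iff $\hat\delta\notin\po(G)$ (the scaling condition is essential here), and tests emptiness of $R$ by the ellipsoid method, where the separation oracle is exactly a call to a hypothetical polynomial-time solver for $\lpn$ with objective $w$. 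If you want to salvage your route, you should either (a) adopt this two-stage structure---prove that your Independent Set construction makes the membership/separation problem for $\po(G)$ hard, then invoke the $R$-set ellipsoid argument, including establishing the analogue of the scaling condition---or (b) actually carry out the soundness argument you postponed, which would require a gadget admitting no fractional cheating and would then constitute a genuinely different, ellipsoid-free proof. At present neither is done, so the proposal does not yet prove the theorem.
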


Recall that the weighted coverage bounds are obtained over functions that are linear combinations of coverage functions. To prove the theorem we will have to define several collections of difference constraints. Thus, we will directly use $(X, Y)$ or $(X, Y, c)$ to denote a difference constraint; the former hides $c$ for brevity. For a given collection of difference constraints, $G$, the LP, $D_S$, can be rewritten into the following equivalent form, $D'_S$:

\begin{align*}
   \min        &&\sum_{(X,Y)\in G} c_{X, Y} \cdot &\delta_{X, Y}
   \label{eqn:dual:lp}\\
   \text{such that} &&
    \sum_{X \subseteq W, W \not \subseteq Y, (X, Y) \in G} \delta_{X, Y} &\geq 1 \quad \forall W \subset [n],
\end{align*}
where the constraints here are equivalent to those in $D_S$ by setting $V = [n] \setminus W$.
Let $\po(G)$ denote the convex region over $\delta$ defined by the constraints in $G$. We first show the separation problem is hard.

\begin{theorem}
    \label{thm:normal-separation}
    Given a difference constraint set $G$ and a    vector $\hat \delta \in R_{\geq 0}^{|G|}$, checking if $\hat \delta \not\in \po(G)$ is NP-complete. Further, this remains the case under the extra condition that $\lambda \hat \delta  \in \po(G)$ for some $\lambda >1$.
\end{theorem}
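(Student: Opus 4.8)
===
PROOF PROPOSAL FOR THEOREM~\ref{thm:normal-separation}
===

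\textbf{Overall approach.} The plan is to prove NP-completeness via a reduction from a known NP-hard covering/set-cover-type problem. Membership in NP (for the complement question ``$\hat\delta \notin \po(G)$'') is the easy direction: a certificate that $\hat\delta \notin \po(G)$ is simply a single set $W \subsetneq [n]$ that violates the constraint $\sum_{X \subseteq W,\, W \not\subseteq Y,\, (X,Y)\in G} \hat\delta_{X,Y} < 1$; given $W$, the sum can be evaluated in polynomial time, so the violation is checkable. The substance is the hardness direction, where I must build an instance $(G,\hat\delta)$ such that deciding whether \emph{some} $W$ is violated encodes an NP-hard problem.

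\textbf{The reduction.} I would reduce from a problem whose natural structure matches ``choose a subset $W$ of the ground set so that a weighted sum over the difference constraints that are `activated' by $W$ falls below a threshold.'' The activation condition for a constraint $(X,Y)$ at a candidate set $W$ is $X \subseteq W$ and $W \not\subseteq Y$. Observe that this is two opposing monotone conditions: enlarging $W$ makes $X \subseteq W$ easier to satisfy but also tends to push $W \not\subseteq Y$ toward being satisfied (i.e.\ toward \emph{including} the constraint in the sum). The key design trick is to encode boolean/combinatorial choices through which elements of $[n]$ lie in $W$, and to tune the coefficients $c_{X,Y}$ and the target values $\hat\delta_{X,Y}$ so that the constraint sum at $W$ drops below $1$ exactly when $W$ corresponds to a ``yes'' certificate (e.g.\ an independent set, a satisfying assignment, or an exact cover). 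Good candidate source problems are \textsf{Vertex Cover}/\textsf{Independent Set} or \textsf{3-SAT}, where $[n]$ indexes the variables/vertices and each difference constraint tests membership of a clause's or edge's endpoints in $W$. I would set $\hat\delta$ so that each individual constraint contributes a small uniform weight, and arrange that the total falls short of $1$ precisely for the $W$'s of interest.

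\textbf{The ``further'' condition.} The extra requirement is that the constructed $\hat\delta$ be a genuine \emph{boundary-style} point: $\hat\delta \notin \po(G)$ is the instance we cannot decide easily, yet a scaled-up copy $\lambda\hat\delta$ \emph{is} feasible for some $\lambda > 1$. Since the region $\po(G)$ is defined by constraints of the form $\sum (\cdot)\,\delta \ge 1$ that are \emph{upward closed} under scaling (scaling $\hat\delta$ by $\lambda \ge 1$ only increases each left-hand side), feasibility is monotone in scaling; so the content of this clause is that $\hat\delta$ can be taken arbitrarily close to the boundary from below. I would enforce this by choosing the uniform per-constraint weight so that the \emph{minimum} over all $W$ of the activated sum is just slightly under $1$ for the worst (yes-instance) $W$, while remaining bounded away from $0$; then a modest uniform blow-up $\lambda$ lifts every such sum to at least $1$, putting $\lambda\hat\delta$ in $\po(G)$. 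This rules out a trivial ``$\hat\delta$ is hopelessly infeasible'' situation and shows the hardness persists near the boundary.

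\textbf{Main obstacle.} I expect the delicate part to be the simultaneous control of \emph{all} exponentially many constraints $W \subsetneq [n]$: the reduction must guarantee that \emph{only} the intended family of $W$'s can produce a violated constraint, while every other $W$ (including all the ``spurious'' subsets not corresponding to a valid combinatorial object) keeps the activated sum at least $1$. Because the activation predicate combines $X \subseteq W$ with $W \not\subseteq Y$ in opposite monotone directions, it is easy to accidentally let some unintended $W$ violate a constraint. The careful bookkeeping lies in choosing the gadget sets $X_i, Y_i$ and the weights so that the global minimum of the activated sum is attained exactly on yes-certificates; this is where I would spend the bulk of the argument, verifying that padding/auxiliary constraints keep every non-certificate $W$ safely above the threshold while the ``yes'' $W$'s dip below it, and simultaneously that the gap is small enough for the $\lambda$-scaling claim to hold.
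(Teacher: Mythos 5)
Your outline reproduces the paper's high-level strategy --- membership in NP via a single violating set $W$ as certificate, hardness via a reduction from a covering-type problem, and the monotonicity-under-scaling observation for the ``further'' clause --- but it stops exactly where the proof begins. You yourself identify the simultaneous control of all $2^n$ candidate sets $W$ as ``where I would spend the bulk of the argument,'' and that is precisely the step you never carry out: no concrete gadget, no weights, no verification. The paper resolves it with a specific weight-separation construction. It reduces from Hitting Set (elements $E$, sets $S_1,\dots,S_m$, budget $k$), adds one fresh element $e^*$, and assigns a \emph{light} uniform value $\hat\delta_{\emptyset,\{e_i\}} = 1/(k+1)$ to the cardinality constraints $(\emptyset,\{e_i\})$ together with \emph{heavy} value $m$ on the constraints $(S_i,E')$ and $(\{e^*\},E')$. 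The heavy values force any minimizer $W_{\min}$ of the activated sum $L(W)$ to satisfy $e^*\notin W_{\min}$ and $S_i \not\subseteq W_{\min}$ for all $i$, so the complement $E\setminus W_{\min}$ is \emph{automatically} a hitting set, and only light constraints can contribute; this yields the closed form $L_{\min} = \frac{1}{k+1}\,\lvert E\setminus W_{\min}\rvert$, whence $\hat\delta\notin\po(G)$ iff a hitting set of size at most $k$ exists. That single mechanism is what makes the exponentially many ``spurious'' $W$'s harmless. Without it (or an equivalent device), your sentence ``arrange that the total falls short of $1$ exactly when $W$ corresponds to a yes-certificate'' restates the difficulty rather than overcoming it, so the hardness direction is not proven.

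Two smaller points. First, for the ``further'' condition your scaling-monotonicity observation is correct and matches the paper, but again you assert rather than verify; the paper takes the explicit factor $\lambda = k+1$ (lifting each light weight to exactly $1$) and checks the two cases $E\not\subseteq W$ (some light constraint contributes, so $L(W)\ge 1$) and $W = E$ (the heavy constraints $(S_i,E')$ contribute), which is a short but necessary case analysis your sketch omits. Second, the activation predicate you quote, $X\subseteq W$ and $W\not\subseteq Y$, is a typo inherited from the paper's display: substituting $V = [n]\setminus W$ into $D_S$ gives $X\subseteq W$ and $Y\not\subseteq W$. This does not damage your qualitative ``opposing monotone directions'' discussion, but any detailed bookkeeping of which constraints a given $W$ activates --- in particular the $W=E$ case of the scaling check --- only goes through under the corrected condition, so a completed version of your proof would need to fix this before doing the case analysis.
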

    We prove this theorem using a reduction from the Hitting Set problem, which is  well-known to be NP-complete. In the Hitting Set problem, the input is a set of $n$ elements $E = \{e_1,\dots, e_n\}$, a collection $\cS = \{S_1, \dots, S_m\}$ of $m$ subsets of $E$, and an integer $k > 0$. The answer is true iff there exists a subset $L$ of $k$ elements such that for every set $S_i \in \cS$ is `hit' by the set $L$ chosen, i.e., $L\cap S_i \neq \emptyset$ for all $i \in [m]$.

    Consider an arbitrary instance $H$ to the Hitting Set.  To reduce the problem to the membership problem w.r.t. $\po(G)$, we create an instance for computing weighted coverage bounds that has the elements $E' = E\cup \{e^*\}$ and the following set $G$ of differece constraints and $\hat \delta$ (here we do not specify $c_{X, Y}$ associated with each difference constraint $(X, Y)$ as it can be arbitrary and we're concerned with the hardness of the membership test):
    \begin{enumerate}
        \item $(\emptyset, \{e_i\})$ for all $
         e_i \in E$ with $\hat \delta_{\emptyset, \{e_i\}} = 1/(k+1)$.
        \item $(S_i, E')$ for all $S_i \in \cS$ with $\hat \delta_{S_i, E'} = m$.
        \item $(\{e^*\}, E')$ with $\hat \delta_{\{e^*\}, E'} = m$.
    \end{enumerate}

 Let $G_1$, $G_2$, and $G_3$ denote the difference constraints defined above in each line respectively, and let $G := G_1 \cup G_2 \cup G_3$.  To establish the reduction we aim to show the following lemma.

\begin{lemma}
    There exists a hitting set of size $k$ in the original instance $H$ if and only if $\hat \delta \not \in \po(G)$.
\end{lemma}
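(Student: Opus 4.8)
The plan is to read off, for each potentially violated constraint of $\po(G)$, exactly which difference constraints are ``active,'' and then to show that a constraint is violated precisely when the active ground elements encode a small hitting set. I would work with the constraints of $\po(G)$ in their original $D_S$ form rather than the rewritten $W$-form: using the correspondence $V = [n]\setminus W$ stated above (with $[n] = E'$), membership $\hat\delta \in \po(G)$ fails if and only if there is a nonempty $V \subseteq E'$ with
\[
  f(V) \;:=\; \sum_{\substack{(X,Y)\in G \\ X\cap V = \emptyset,\ Y\cap V \neq \emptyset}} \hat\delta_{X,Y} \;<\; 1 .
\]
So the whole lemma reduces to understanding $\min_{V \neq \emptyset} f(V)$.

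First I would evaluate the contribution of each group $G_1,G_2,G_3$ to $f(V)$ for a fixed nonempty $V$. A constraint $(\emptyset,\{e_i\})\in G_1$ is active exactly when $e_i\in V$, so $G_1$ contributes $|V\cap E|/(k+1)$. A constraint $(S_i,E')\in G_2$ is active exactly when $S_i\cap V=\emptyset$ (its tail condition $E'\cap V\neq\emptyset$ is automatic since $V\neq\emptyset$), so $G_2$ contributes $m$ times the number of sets missed by $V$. Finally $(\{e^*\},E')\in G_3$ is active exactly when $e^*\notin V$, contributing $m$ in that case. This yields the closed form
\[
  f(V) = \frac{|V\cap E|}{k+1} + m\cdot\bigl|\{\, S_i\in\cS : S_i\cap V=\emptyset \,\}\bigr| + m\cdot\mathbb{1}[\,e^*\notin V\,].
\]

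The key observation, and the reason the gadget works, is that the two $m$-weighted terms act as hard penalties: since $m\geq 1$, if $V$ misses any set $S_i$ or omits $e^*$ then $f(V)\geq m\geq 1$, so such a $V$ cannot violate its constraint. Hence any violating $V$ must satisfy $e^*\in V$ and $S_i\cap V\neq\emptyset$ for all $i$; because every $S_i\subseteq E$, this says exactly that $L := V\cap E$ is a hitting set. For such $V$ both penalty terms vanish and $f(V)=|L|/(k+1)$, which is strictly below $1$ iff $|L|\leq k$. Conversely, from any hitting set $L\subseteq E$ with $|L|\le k$ the choice $V = L\cup\{e^*\}$ gives $f(V)=|L|/(k+1)\le k/(k+1)<1$. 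Combining the two directions, $\hat\delta\notin\po(G)$ if and only if $H$ admits a hitting set of size at most $k$.

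The only remaining routine point is the passage from ``size at most $k$'' to ``size exactly $k$'': assuming as usual that every $S_i$ is nonempty and that $k\le n$, any hitting set of size $\le k$ pads up (with arbitrary elements of $E$) to one of size exactly $k$, and any superset of a hitting set is again a hitting set, so the two formulations coincide. I expect the main obstacle to be the bookkeeping in the contribution computation — correctly translating the activity conditions $X\cap V=\emptyset$ and $Y\cap V\neq\emptyset$ for each of the three groups and confirming that the penalty terms dominate whenever $V$ fails to contain $e^*$ or to hit some $S_i$ — rather than any deeper structural difficulty; once $f(V)$ is in closed form the equivalence is immediate.
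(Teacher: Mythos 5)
Your proof is correct and follows essentially the same route as the paper's: both identify which of the constraints in $G_1, G_2, G_3$ are active for a candidate violating set, use the weight-$m$ constraints as penalties forcing any violator to contain $e^*$ and hit every $S_i$, and reduce the violated-constraint value to $|L|/(k+1)$ for the induced hitting set $L$. The only differences are presentational — you work in the complementary parametrization $V = E'\setminus W$ of the $D_S$ form and argue directly about violating sets (via $m \ge 1$) rather than about the minimizer $W_{\min}$, which if anything sidesteps a couple of typos in the paper's version and handles the ``size exactly $k$'' point more explicitly.
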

\begin{proof}
    Let $L(W) := \sum_{(X, Y) \in G: X \subseteq W, W \not \subseteq Y} \hat \delta_{Y|X}$. Let $L_{\min} := \min_{W\subset E'} L(W)$ and $W_{\min} := \arg \min_{W \subset E'} L(W)$. To put the lemma in other words, we want to show that $H$ admits a hitting set of size $k$ if and only if $L_{\min}  < 1$.

Let $\hat \delta(G') := \sum_{(X, Y) \in G'} \hat \delta_{X, Y}$. Note that $L_{\min} \leq L(\emptyset) = \hat \delta(G_1) = \frac{m}{k+1}$. Therefore, we can have the following conclusions about $W_{\min}$.
    \begin{itemize}
        \item  $e^* \not \in W_{\min}$ since otherwise $L_{\min} \geq \hat \delta(G_3)  = m$.
        \item For all $S_i \in \cS$, $S_i \not \subseteq W_{\min}$ since otherwise $L_{\min} \geq \hat \delta(\{(S_i, E')\}) = m$.       
    \end{itemize}
    Thus, we have shown that only the difference constraints in $G_1$ can contribute to $L_{\min}$.    
    As a result,
    $$L_{\min} = L(W_{\min}) = \hat \delta ( \{(\emptyset, \{e_i\}) \in G_1:  \{e_i\} \not \subseteq W_{\min}\})  =   \frac{1}{k+1} |E \setminus W_{\min}|.$$
    As observed above, for all $S_i \in \cS$, $S_i \not \subseteq W_{\min}$, which means $(E \setminus W_{\min}) \cap S_i \neq \emptyset$. This immediately implies that $E \setminus W_{\min}$ is a hitting set.

    To recap, if $\hat \delta \not \in \po(G)$, we have $\frac{1}{k+1} |E \setminus W_{\min}| < 1$ and therefore the original instance $H$ admits a hitting set
    $E \setminus W_{\min}$ of size at most $k$.

    Conversely, if the instance $H$ admits a hitting set $E'$ of size $k$, we can show that $L(E \setminus E') = \hat \delta(\{(\emptyset, \{e_i\}) \in G_1 \; | \; e_i \in E'\}) = \frac{k}{k+1} < 1$, which means  $\hat \delta \not \in \po(G)$. This direction is essentially identical and thus is omitted. \qed
\end{proof}

The above lemma shows checking $\hat \delta \not \in \po(G)$ is NP-hard. Further, a violated constraint can be compactly represented by $W$; thus the problem is in NP. Finally, if we scale up $\hat \delta$ by a factor of $\lambda = k+1$, we show $\lambda \hat q \in \po(G)$. We consider two cases. If $E \not \subseteq W$, we have $L(W) \geq
\hat \delta( \{ i \in [n] : e_i \not \in W \}) \geq \frac{1}{k+1} \lambda = 1$. If $E \subseteq W$, it must be the case that $E = W$ since $W \neq E'$ and $E'= E \cup \{e^*\}$. In this case $L(E) = \hat \delta(G_2) \geq m \lambda \geq 1$. Thus, for all $W \subset E'$, we have $L(W) \geq 1$, meaning $\lambda \hat \delta \in \po(G)$.
 This completes the proof of Theorem~\ref{thm:normal-separation}.

\smallskip
Using this theorem, we want to show that we can't solve $D'_S$ in polynomial time unless P = NP. While there exist relationship among the optimization problem, membership problem and their variants \cite{grotschel2012geometric}, in general hardness of the membership problem doesn't necessarily imply hardness of the optimization problem. However, using the special structure of the convex body in consideration, we can show such an implication in our setting.
The following theorem would immediately imply Theorem~\ref{thm:normal-hardness}.

\begin{theorem}
    We cannot solve $\lpn$ in polynomial time unless P = NP.
\end{theorem}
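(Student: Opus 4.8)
The plan is to show that a polynomial-time algorithm for $\lpn$ would yield a polynomial-time membership test for $\po(G)$, which is NP-hard by Theorem~\ref{thm:normal-separation}, and hence would force P $=$ NP. By LP strong duality the optimal value of $\lpn = P_S$ equals that of $D_S$, so an algorithm for $\lpn$ lets us evaluate $\mathrm{OPT}(c):=\min_{\delta\in\po(G)}\langle c,\delta\rangle$ for every nonnegative cost vector $c$, and, by a standard polynomial self-reduction from the optimal value to an optimal solution, recover an optimal $\delta^*\in\po(G)$ of polynomial-size support. I will treat this as an optimization oracle for the convex body $\po(G)$.

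The structural fact I would exploit is that $\po(G)=\{\delta\ge 0:\langle a_W,\delta\rangle\ge 1\ \forall W\subset[n]\}$ is an up-closed polyhedron of blocking type, where $a_W$ is the $0/1$ indicator of $\{i:X_i\subseteq W,\ W\not\subseteq Y_i\}$. By Fulkerson blocking duality its blocker $B:=\{y\ge 0:\langle\delta,y\rangle\ge 1\ \forall\delta\in\po(G)\}$ has the vectors $a_W$ as its extreme points, so that $\min_{y\in B}\langle\hat\delta,y\rangle=\min_W\langle a_W,\hat\delta\rangle=\min_W L(W)=L_{\min}$. Combining this with the preceding lemma, $\hat\delta\in\po(G)$ if and only if $L_{\min}\ge 1$, i.e. if and only if the linear function $\langle\hat\delta,\cdot\rangle$ has minimum value at least $1$ over $B$.

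The observation that turns the optimization oracle into a membership test is that the oracle is exactly a separation oracle for $B$: a point $y$ lies in $B$ iff $\mathrm{OPT}(y)\ge 1$, and whenever $\mathrm{OPT}(y)<1$ the recovered optimizer $\delta^*\in\po(G)$ yields the valid inequality $\langle\delta^*,\cdot\rangle\ge 1$ for $B$ that $y$ violates. I would therefore run the ellipsoid method to minimize $\langle\hat\delta,\cdot\rangle$ over the bounded polytope $B'=B\cap[0,1]^{|G|}$, using the $\lpn$-oracle (together with trivial separation for the box) as the separation subroutine. Because every extreme point of $B$ is $0/1$, an optimum is attained in $B'$ and equals a subset-sum of the $\hat\delta_i$; comparing this value with $1$ decides membership of $\hat\delta$, and hence, by the reduction of Theorem~\ref{thm:normal-separation}, the underlying Hitting Set instance.

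The main obstacle is exactly the subtlety flagged before the statement: making the ellipsoid machinery~\cite{grotschel2012geometric} rigorous even though $\po(G)$ and $B$ are unbounded, and guaranteeing polynomial bit-complexity and a decisive gap so that the weak optimum returned by the oracle suffices. This is where the two pieces of special structure enter. The blocking/up-closed structure confines the whole computation to the bounded cube $[0,1]^{|G|}$ with $0/1$ vertices, so no separating direction or vertex is large and every separating hyperplane corresponds to a genuine constraint $a_W$ (equivalently a set $W$) of polynomial size. The promise $\lambda\hat\delta\in\po(G)$ with $\lambda>1$ forces $L_{\min}\ge 1/\lambda$, bounding the objective range away from zero and yielding a polynomial additive gap between the ``$L_{\min}\ge 1$'' and ``$L_{\min}<1$'' cases; this lets the approximate optimum be rounded to the exact subset-sum value, so the comparison with $1$ is carried out exactly in polynomial time.
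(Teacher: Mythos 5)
Your proposal is correct in substance, but it proves the theorem by a genuinely different route than the paper. The paper introduces the set $R = \{w \mid w\cdot(\delta-\hat\delta) > 0 \ \forall \delta \in \po(G)\}$ of strictly separating directions, proves (using the promise $\lambda\hat\delta \in \po(G)$, $\lambda>1$, via the minimal scaling $\lambda'$ and a facet of $\po(G)$) that $\hat\delta \notin \po(G)$ iff $R \neq \emptyset$, and then tests emptiness of $R$ by running the ellipsoid method on an auxiliary LP $\max \epsilon$ subject to $w\cdot(\delta-\hat\delta) \geq \epsilon$ over all vertices $\delta$ of $\po(G)$, with the hypothetical $\lpn$ solver serving as the separation oracle. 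You instead invoke Fulkerson blocking duality: the $\lpn$ value oracle $\mathrm{OPT}(y)=\min_{\delta\in\po(G)} y\cdot\delta$ is precisely a separation oracle for the blocker $B$ of $\po(G)$, whose relevant extreme points are the $0/1$ vectors $a_W$, so membership of $\hat\delta$ reduces to a single ellipsoid computation of $L_{\min}=\min_{y\in B\cap[0,1]^{|G|}} \hat\delta\cdot y$ compared against $1$. Your route buys cleaner numerics -- the optimum is a subset sum of the rational $\hat\delta_i$, so a $2^{-\mathrm{poly}}$ spacing argument makes the weak-optimization output decisive, and the $\lambda>1$ promise becomes a convenience rather than a load-bearing step -- and it quietly repairs an informality in the paper's auxiliary LP, which as written is unbounded in $w$ (scaling $w$ scales $\epsilon$) unless $w$ is normalized, whereas your computation is confined to the cube from the start. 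Two caveats on your write-up: first, both arguments need an optimal $\delta^*\in\po(G)$, not just the optimal value, to produce a separating hyperplane; you flag this and appeal to the standard cost-perturbation self-reduction (legitimate, given that $\po(G)$ has a $0/1$ constraint matrix and hence polynomial vertex complexity), while the paper simply assumes the solver returns a minimizer, so you are no worse off, but this step should be spelled out if the proof is to be fully rigorous. Second, your remark that ``every separating hyperplane corresponds to a genuine constraint $a_W$'' is slightly off: hyperplanes separating a point from $B$ come from points $\delta^*\in\po(G)$, while the $a_W$ are vertices of $B$ itself; this is a wording slip, not a gap, since the hyperplanes you actually use have the polynomial bit-size needed for the ellipsoid bounds.
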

\begin{proof}
Consider an instance to the membership problem consisting of $G$ and $\hat \delta$. By Theorem~\ref{thm:normal-separation}, we know checking $\hat \delta \not \in \po(G)$ is NP-complete, even when $\lambda \hat \delta  \in  \po(G)$ for some $\lambda > 1$. For the sake of contradiction, suppose we can solve $\lpn$ in polynomial time for any $w \geq 0$ over the constraints defined by the same $\po(G)$. We will draw a contradiction by showing how to exploit it to check $\hat \delta \not \in \po(G)$ in polynomial time.

Define $R := \{w \; | \; w \cdot (\delta - \hat \delta) > 0 \;\; \forall \delta \in \po(G)\}$. 
It is straightforward to see that $R$ is convex.

We claim that $\hat \delta \not \in \po(G)$ iff $R \neq \emptyset$. To show the claim suppose $\hat \delta \not \in \po(G)$. Recall from Theorem~\ref{thm:normal-separation} that there exists $\lambda > 1$ such that $\lambda \hat \delta \in \po(G)$. Let $\lambda' >0$ be the smallest $\lambda''$ such that $\lambda'' \hat \delta \in \po(G)$. Observe that $\lambda' > 1$ and $\lambda' \hat \delta$ lies on a facet of $\po(G)$, which corresponds to a hyperplane $\sum_{X \subseteq W, W \not \subseteq Y, (X, Y) \in G} \delta_{X, Y} = 1$ for some $W \subset [n]$. Let $w$ be the orthogonal binary vector of the hyperplane; so we have $w \cdot \lambda' \hat \delta = 1$. Then, $w \cdot (\delta - \lambda' \hat \delta) \geq 0$ for all $\delta \in \po(G)$. Thus, for any $\delta \in \po(G)$ we have $w \cdot  (\delta - \hat \delta) \geq (\lambda'  -1) w \cdot  \hat \delta = \frac{\lambda' - 1}{\lambda} w \cdot \lambda \hat \delta \geq \frac{\lambda' - 1}{\lambda} > 0$. The other direction is trivial to show: If $\hat \delta \in \po(G)$, no $w$ satisfies $w \cdot (\delta - \hat \delta) > 0$ when $\delta = \hat \delta$.

Thanks to the claim, we can draw a contradiction if we can test if $R = \emptyset$ in polynomial time. However, $R$ is defined on an open set which is difficult to handle. Technically, $R$ is defined by infinitely many constraints but it is easy to see that we only need to consider constraints for $\delta$ that are vertices of $\po(G)$. Further, $\po(G)$ is defined by a finite number of (more exactly at most $2^n$) constraints (one for each $W$). This implies that the following LP, \begin{align*}
    \max & \;\epsilon \\
    w \cdot (\delta - \hat \delta) &\geq  \epsilon \quad  \forall \delta \in \po(G) \\
    w &\geq 0
\end{align*}
has a strictly positive optimum value iff $R \neq \emptyset$. We solve this using the ellipsoid method. Here, the separation oracle is, given $w \geq 0$ and $\epsilon$, to determine if $w \cdot (\delta - \hat \delta) \geq \epsilon$ for all $\delta \in \po(G)$; otherwise it should find a $\delta \in \po(G)$ such that $w \cdot (\delta - \hat \delta) < \epsilon$.
In other words, we want to know $\min_{\delta \in \po(G)} w \cdot (\delta - \hat \delta)$.  If the value is no smaller than $\epsilon$, all constraints are satisfied, otherwise, we can find a violated constraint, which is given by the $\delta$ minimizing the value. But, because the oracle assumes $w \cdot \hat \delta$ is fixed, so this optimization is essentially the same as solving  $\lpn$, which can be solved by the hypothetical polynomial time algorithm we assumed to have for the sake of contradiction. Thus, we have shown that we can decide in poly time if $R$ is empty or not. \qed
\end{proof}

\section{Hard Special Cases}
\label{sect:polymatroidreductions}

In this section we present two classes of seemingly simple instances which turn out to be as hard as general instances.

\subsection{Reduction from General DCs to Acyclic DCs and Simple FDs}
\label{subsect:acyclicplussimple}

\begin{theorem}
    \label{thm:acyclicplussimple}
    For the problem of computing the polymatroid bound, an arbitrary instance can be converted into another instance in polynomial time without changing the bound, where the difference constraints (DCs) can be divided into two subsets of acyclic DCs and simple functional dependencies (FDs)---further, each FD contains exactly two elements.
\end{theorem}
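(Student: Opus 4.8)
The plan is to ``unroll'' the cyclic dependency structure across $k+1$ layers, gluing the layers together with cost-zero binary FDs. First I would enlarge the universe to $[n]\times\{0,1,\dots,k\}$, writing $a^{(\ell)}$ for the layer-$\ell$ copy of $a\in[n]$. For every $a\in[n]$ and every $\ell\in\{0,\dots,k-1\}$ I would add the two simple FDs $(\{a^{(\ell)}\},\{a^{(\ell)},a^{(\ell+1)}\})$ and $(\{a^{(\ell+1)}\},\{a^{(\ell)},a^{(\ell+1)}\})$, each of cost $0$. Each contains exactly two elements and has $|X|=1$, and together they force all copies of a fixed $a$ to be informationally equivalent.

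Next I would reroute the original constraints so that the remaining DCs become acyclic. For the $i$-th constraint $(X_i,Y_i,c_i)$ I would introduce the single DC $(\tilde X_i,\tilde Y_i,c_i)$ with $\tilde X_i:=\{a^{(i-1)}:a\in X_i\}$ and $\tilde Y_i:=\tilde X_i\cup\{b^{(i)}:b\in Y_i\setminus X_i\}$. Since $X_i\subsetneq Y_i$ we have $\tilde X_i\subsetneq\tilde Y_i$, so this is a valid DC, and every dependency edge it creates runs from a layer-$(i-1)$ copy to a layer-$i$ copy. Assigning each copy $a^{(\ell)}$ the potential $\ell$, every DC edge strictly increases the potential, so the DC dependency graph is acyclic. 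The binary FDs may of course create cycles, but they live in the separate FD bucket, exactly as the statement permits. The enlarged universe has $n(k+1)$ elements, and I add $O(nk)$ FDs and $k$ DCs, so the reduction runs in polynomial time.

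The remaining work is to show the polymatroid bound is unchanged, for which the key lemma is that the FD chains force any feasible polymatroid $h'$ on the enlarged universe to depend only on the projection $\pi(S):=\{a:a^{(\ell)}\in S\text{ for some }\ell\}$. This follows from the standard fact that a cost-zero FD $(\{u\},\{u,v\})$ together with monotonicity and submodularity forces $h'(S\cup\{v\})=h'(S)$ whenever $u\in S$; iterating around each copy-cycle gives $h'(S)=h'(\pi^{-1}(\pi(S)))$. Granting this, $h'(\tilde Y_i)=h^*(Y_i)$ and $h'(\tilde X_i)=h^*(X_i)$ where $h^*(T):=h'(\{a^{(0)}:a\in T\})$, so feasibility and the objective (which equals $h^*([n])$ since $\pi$ of the full universe is $[n]$) transfer from the new instance to the original. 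Conversely, given an optimal polymatroid $h^*$ on $[n]$ I would set $h'(S):=h^*(\pi(S))$ and verify, using $\pi(S\cap S')\subseteq\pi(S)\cap\pi(S')$ and monotonicity, that $h'$ is a polymatroid meeting all new constraints with the same objective. The two directions together give equality of the bounds.

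I expect the main obstacle to be the projection lemma and its two-sided use: one must check both that the glued copies genuinely collapse $h'$ to a function of $\pi(S)$—so that the layer-shifted constraint $(\tilde X_i,\tilde Y_i)$ really computes the original $h(Y_i)-h(X_i)$ despite drawing its conditioning set and its new elements from two different layers—and that the pullback $h^*\circ\pi$ preserves submodularity. Everything else (acyclicity of the DC part via the potential argument, the simple/binary form of the FDs, and polynomiality) is immediate from the construction.
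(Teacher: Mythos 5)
Your proposal is correct and follows essentially the same strategy as the paper: duplicate each element, glue the copies together with cost-zero binary FDs (whose combination with monotonicity and submodularity forces any feasible polymatroid to factor through the projection identifying copies), reroute each DC across copies so the DC part becomes acyclic, and transfer the bound in both directions via exactly the projection facts you state, namely $\pi(S\cup S')=\pi(S)\cup\pi(S')$ and $\pi(S\cap S')\subseteq\pi(S)\cap\pi(S')$ plus monotonicity for the pullback's submodularity. The only difference is economy: the paper uses just two copies $x_i,y_i$ per element (sending all $X$-sides to the $x$-copies and all $Y$-sides to the $y$-copies), whereas your $k+1$-layer unrolling uses $n(k+1)$ elements and $O(nk)$ FDs --- larger but still polynomial, and your per-constraint layering has the minor advantage that $\tilde X_i\subsetneq\tilde Y_i$ holds literally, matching the formal definition of a difference constraint.
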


\noindent
\textbf{Reduction:} Suppose we are given an arbitrary instance $I$ consisting of the universe $U := [n]$ and a set $G$ of DCs. The new instance $I'$ has $U' := \cup_{i \in [n]}\{x_i, y_i\}$ as universe where $x_i$ and $y_i$ are distinct copies of $i$ and the following set $G'$ of DCs. For each $i \in [n]$, we first add the following simple functional dependencies to $G'$:
\begin{align*}
    (\{x_i\}, \{x_i, y_i\}, 0) \\
    (\{y_i\}, \{x_i, y_i\}, 0)
\end{align*}

Then for each $(A, B, d) \in G$, we create a new DC $(A', B', d)$ by replacing each $i \in A$ with $x_i$ and each $j \in B$ with $y_j$, and add it to $G'$. By construction these DCs are from $\{x_1, x_2, \cdots, x_n\}$ to $\{y_1, y_2, \cdots, y_n\}$ and therefore are acyclic.

\smallskip
The following simple observation states that $x_i$ and $y_i$ are indistinguishable in computing the polymatroid bound for $I'$.

\begin{lemma}
    \label{lem:g-properties} Let $g$ be a submodular function that satisfies $G'$  of $I'$.
    For any $i \in [n]$ and any $B \subseteq U'$ such that $x_i, y_i \not \in B$, we have $g(B\cup \{x_1\}) = g(B \cup \{x_2\}) = g(B \cup \{x_1,x_2\}).$
\end{lemma}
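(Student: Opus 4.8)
The plan is to verify the identity first in the base case $B = \emptyset$ and then propagate it to an arbitrary $B$, using the fact that a function admissible for the polymatroid bound is monotone in addition to being submodular (polymatroid functions are nondecreasing). The main tool will be the diminishing-marginal-returns form of submodularity: for every element $v$ and all sets $A \subseteq A'$ with $v \notin A'$, one has $g(A' \cup \{v\}) - g(A') \le g(A \cup \{v\}) - g(A)$.

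For the base case, the two functional dependencies added to $G'$ read $g(\{x_i, y_i\}) - g(\{x_i\}) \le 0$ and $g(\{x_i, y_i\}) - g(\{y_i\}) \le 0$. Combining each with monotonicity, which gives $g(\{x_i\}) \le g(\{x_i, y_i\})$ and $g(\{y_i\}) \le g(\{x_i, y_i\})$, I obtain $g(\{x_i\}) = g(\{y_i\}) = g(\{x_i, y_i\})$. In particular, the marginal gain of $y_i$ over $\{x_i\}$, and of $x_i$ over $\{y_i\}$, is exactly zero.

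To lift this to an arbitrary $B$ with $x_i, y_i \notin B$, I would apply the diminishing-returns inequality with $v = y_i$, $A = \{x_i\}$, and $A' = B \cup \{x_i\}$; the hypothesis $x_i, y_i \notin B$ guarantees both $y_i \notin A'$ and $A \subseteq A'$. This yields $g(B \cup \{x_i, y_i\}) - g(B \cup \{x_i\}) \le g(\{x_i, y_i\}) - g(\{x_i\}) = 0$, so $g(B \cup \{x_i, y_i\}) \le g(B \cup \{x_i\})$, and monotonicity supplies the reverse inequality, forcing $g(B \cup \{x_i, y_i\}) = g(B \cup \{x_i\})$. The symmetric application with $v = x_i$, $A = \{y_i\}$, $A' = B \cup \{y_i\}$ gives $g(B \cup \{x_i, y_i\}) = g(B \cup \{y_i\})$, and chaining the two equalities proves the claim. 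Equivalently, one may invoke submodularity directly on the pair $B \cup \{x_i\}$ and $\{x_i, y_i\}$, whose union is $B \cup \{x_i, y_i\}$ and whose intersection is $\{x_i\}$.

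The argument involves no genuine obstacle; the one point deserving care is that it relies on monotonicity and not merely submodularity, since submodularity alone delivers only the ``$\le$'' directions of each equality. The other point to check is that the diminishing-returns step is applied under the stated hypothesis $x_i, y_i \notin B$, so that $y_i$ (respectively $x_i$) is a fresh element being adjoined to both $\{x_i\}$ (respectively $\{y_i\}$) and its superset $B \cup \{x_i\}$ (respectively $B \cup \{y_i\}$), which is exactly what makes the marginal-gain comparison legitimate.
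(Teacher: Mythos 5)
Your proof is correct and follows essentially the same route as the paper's: the paper likewise combines the zero-cost functional dependencies with the diminishing-returns form of submodularity (comparing the marginal gain of $y_i$ over $B \cup \{x_i\}$ against that over $\{x_i\}$) and then uses monotonicity for the reverse inequality, handling the $y_i$ case symmetrically. Your explicit base case $g(\{x_i\}) = g(\{y_i\}) = g(\{x_i,y_i\})$ and your correct reading of the statement's $x_1, x_2$ as $x_i, y_i$ are minor elaborations of, not departures from, the paper's argument.
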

\begin{proof}
By submodularity and a FD in $G'$ involving $x_i, y_i$, we have:
$$g(B\cup \{x_i,y_i\}) - g(B\cup \{x_i\}) \leq g(\{x_i,y_i\}) - g(\{x_i\}) \leq 0.$$
Mototonicity implies
$g(B\cup \{x_i,y_i\}) - g(B\cup \{x_i\}) \geq 0$ which means $f(B\cup \{x_i,y_i\}) = f(B\cup \{x_i\})$. The other equality $g(B\cup \{x_i,y_i\}) = g(B\cup \{y_i\})$ is established analogously.  \qed
\end{proof}

Henceforth, we will show the following to complete the proof of Theorem~\ref{thm:acyclicplussimple}:
\begin{enumerate}
    \item Given a monotone submodular function $f$ achieving the optimum polymatroid bound for $I$, we create a monotone submodular function $g$ for $I'$ such that $f(U) = g(U')$.
    \item Conversely, given a monotone submodular function $g$ achieving the optimum polymatroid bound for $I'$, we crate a monotone submodular function $f$ for $I$ such that $f(U) = g(U')$.
\end{enumerate}

We first show the first direction. Let $h: 2^{U'} \rightarrow 2^U$ be a set function that converts a subset of $U'$ to a subset of $U$ by counting $x_i$ and $y_i$ only once for each $i \in [n]$. Formally, $i \in h(B)$ iff $x_i \in B$ or $y_i \in B$. Then, we set $g(B) := f(h(B))$.

\begin{lemma}
    \label{lem:h-properties}
    For any $A, B \subseteq U'$, $h(A) \cup h(B) = h(A \cup B)$ and $h(A) \cap h(B) \supseteq h(A \cap B)$.
\end{lemma}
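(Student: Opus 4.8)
The plan is to verify both statements by a direct membership chase at the level of a single coordinate $i \in [n]$, exploiting the fact that $h$ acts coordinatewise: by definition $i \in h(C)$ if and only if $x_i \in C$ or $y_i \in C$, for every $C \subseteq U'$. Everything then reduces to elementary logical manipulation of the four atomic statements $x_i \in A$, $y_i \in A$, $x_i \in B$, $y_i \in B$.

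For the first (equality) statement I would fix $i$ and unfold the predicate for $h(A \cup B)$. Since $x_i \in A \cup B$ is equivalent to ($x_i \in A$ or $x_i \in B$), and likewise for $y_i$, the condition $i \in h(A \cup B)$ becomes a disjunction of the four atoms above. Regrouping this disjunction as (($x_i \in A$ or $y_i \in A$) or ($x_i \in B$ or $y_i \in B$))---which is legitimate because disjunction is associative and commutative---gives exactly $i \in h(A)$ or $i \in h(B)$, i.e. $i \in h(A) \cup h(B)$. As this chain of equivalences holds for each $i$, the two sets coincide.

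For the second (containment) statement I would again fix $i$ and start from $i \in h(A \cap B)$, which says ($x_i \in A$ and $x_i \in B$) or ($y_i \in A$ and $y_i \in B$). Each disjunct implies both ($x_i \in A$ or $y_i \in A$) and ($x_i \in B$ or $y_i \in B$), so the whole disjunction implies $i \in h(A)$ and $i \in h(B)$, that is $i \in h(A) \cap h(B)$. This yields $h(A \cap B) \subseteq h(A) \cap h(B)$, which is the claimed inclusion.

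I do not expect a genuine obstacle here; the entire lemma is a quantifier-free logical identity and the argument above is essentially the complete proof. The one point I would be careful about, rather than mechanical, is to recognize why the second statement must stay an inclusion and not an equality: a coordinate can lie in $h(A) \cap h(B)$ because $A$ contributes the copy $x_i$ while $B$ contributes $y_i$, in which case neither copy lies in $A \cap B$ and hence $i \notin h(A \cap B)$. The asymmetry is exactly that the union case factors into a disjunction of disjunctions, whereas the intersection case confronts a disjunction of conjunctions, and the conjunction does not commute with the two-copy disjunction the way the union does.
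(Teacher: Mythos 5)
Your proof is correct and takes essentially the same route as the paper's: a coordinatewise membership chase using the defining equivalence $i \in h(C) \Leftrightarrow x_i \in C \text{ or } y_i \in C$. If anything, your version is slightly more complete --- the paper's one-line argument for the first claim explicitly justifies only the inclusion $h(A) \cup h(B) \subseteq h(A \cup B)$, whereas your chain of equivalences gives both directions, and your closing remark about why the intersection case cannot be upgraded to equality (the $x_i$-from-$A$, $y_i$-from-$B$ scenario) is a correct observation the paper does not make.
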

\begin{proof}
The first claim follows because  if $x_i$ or $y_i$ is in any of $A$ and $B$, it is also   in $A\cup B$. The second claim follows because if $i  \in h(A \cap B)$, we have $x_i \in A \cap B$ or $y_i \in A \cap B$ and in both cases, we have $i \in h(A) \cap h(B)$. \qed
\end{proof}

By definition we have $g(U') = f(h(U'))) = f(U)$. Therefore, we only need to show $g$ is monotone and submodular. Showing monotonicity is trivial and is left as an easy exercise. We can show that $g$ is submodular as follows. For any $A, B \subseteq U'$, we have,
\begin{align*}
    g(A) + g(B) &= f(h(A)) + f(h(B)) \geq f(h(A)\cup h(B)) + f(h(A)\cap h(B))
    \\&\geq g(h(A \cup B)) + g(h(A \cap B)) = f(A\cup B) + f(A\cap B),
\end{align*}
where the first inequality follows from $f$'s submodularity and the second from
$f$'s monotonicity and Lemma~\ref{lem:h-properties}. Thus we have shown the first direction.

To show the other direction, we define $f(A)$ to be $g(B)$ where $i \in A$ iff $x_i \in B$. By definition, we have $f(U) = g(\{x_1, x_2, \cdots, x_n\})$. Further, by repeatedly applying Lemma~\ref{lem:g-properties}, we have $g(\{x_1, x_2, \cdots, x_n\}) = g(U')$. Thus we have shown $f(U) = g(U')$. Further,  $f$ is essentially identical to $g$ restricted to $\{x_1, x_2, \ldots, x_n\}$. Thus, $f$ inherits $g$'s monotonicity and sumodularity.

This completes the proof of Theorem~\ref{thm:acyclicplussimple}.

\subsection{Reduction from General DCs to DCs $(X, Y, d)$ with $|X| \leq 2$}
\label{subsect:reducedinstances}

\newcommand{\opt}{\textsf{opt}}

\begin{theorem}
    \label{thm:2-3-reduction}
	There is a polynomial-time reduction from a general instance to an instance preserving the polymatroid bound, where for each
difference constraint $(X, Y, d)$ we have $|Y| \leq 3$ and $|X| \leq 2$. Further, the new instance satisfies the following:
\begin{itemize}
\item If $|Y| = 3$, then $|X|=2$ and $d = 0$.
\item If $|Y| = 2$, then $|X|=1$.
\end{itemize}
\end{theorem}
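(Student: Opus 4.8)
The plan is to reduce a general instance to one with the prescribed shape by introducing fresh auxiliary elements that let us ``factor'' any difference constraint $(X,Y,d)$ with large $|X|$ or $|Y|$ into a chain of small ones, while preserving the polymatroid bound in both directions. The key gadget is a functional dependency of the form $(\{a,b\},\{a,b,c\},0)$, which forces $c$ to be ``covered'' by $\{a,b\}$; by Lemma~\ref{lem:g-properties}-style reasoning, such a $c$ is indistinguishable from the pair $\{a,b\}$ in any monotone submodular function, so we can use these length-three FDs to encode binary ``merge'' nodes. The strategy is to build, for each original set $X_i$ and each $Y_i$, a balanced binary tree of such merge gadgets whose leaves are the original elements and whose root is a single fresh element representing the whole set; then a constraint $(X_i,Y_i,c_i)$ becomes the simple constraint $(\{r_{X_i}\},\{r_{Y_i}\},c_i)$ with $|X|=|Y|=1$, which we can further split to meet the $|X|=1,|Y|=2$ normalization if desired.

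First I would set up the auxiliary universe: for each original $i\in[n]$ keep $i$, and for each internal node of each merge tree add a fresh element. For a node $c$ with children $a,b$ add the FD $(\{a,b\},\{a,b,c\},0)$; note this has $|Y|=3$, $|X|=2$, $d=0$, matching the first bullet. I would then prove the analogue of Lemma~\ref{lem:g-properties}: for any monotone submodular $g$ satisfying these FDs, the root element $r_S$ of the tree over a set $S$ satisfies $g(B\cup\{r_S\})=g(B\cup S)$ whenever $B$ is disjoint from the tree's elements, by inducting up the tree and using submodularity plus monotonicity exactly as in the existing proof. This ``root equals set'' identity is what lets a single-element constraint on roots simulate the original set-valued constraint. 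To handle a constraint $(X_i,Y_i,c_i)$ with the cost on it, I would route it through the roots: when $c_i>0$ I replace it by $(\{r_{X_i}\},\{r_{X_i},r_{Y_i}\},c_i)$ with $|X|=1,|Y|=2$ (matching the second bullet), after first ensuring $X_i\subseteq Y_i$ so the root of $Y_i$ dominates the root of $X_i$ appropriately; if not, I add a union gadget for $X_i\cup Y_i$.

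The two-directional equivalence then follows the template already in the excerpt for Theorem~\ref{thm:acyclicplussimple}. In the forward direction, given an optimal $f$ for the original instance, I define $g$ on the expanded universe by collapsing each auxiliary element to the original set it represents (via a map $h$ generalizing the one in Lemma~\ref{lem:h-properties}), and verify $g$ is monotone and submodular and satisfies all gadget FDs and the translated cost constraints, with $g(U')=f(U)$. In the reverse direction, given optimal $g$ for the expanded instance, I restrict to the original elements to recover $f$, using the root-equals-set identity to check that every original constraint $(X_i,Y_i,c_i)$ is satisfied and that the objective is preserved. I would also argue the reduction is polynomial: each tree has $O(|X_i|+|Y_i|)$ nodes, so the total number of fresh elements and gadget constraints is polynomial in the input size.

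The main obstacle I anticipate is the reverse direction together with the cost bookkeeping. The collapsing map $h$ behaves cleanly for unions but only gives a containment $h(A\cap B)\subseteq h(A)\cap h(B)$ (exactly as in Lemma~\ref{lem:h-properties}), so submodularity transfer must again lean on monotonicity; verifying that this still goes through once we have \emph{many} interacting trees sharing leaf elements (the same original $i$ may be a leaf in several $X_j$ or $Y_j$ trees) is the delicate point, since a naive construction could create unintended submodular couplings among trees. I would address this by giving each constraint its \emph{own} private copies of the internal tree nodes (sharing only the original leaves), so that distinct trees interact only through the genuinely shared original elements; then the root-equals-set identity localizes to each tree and the equivalence arguments decouple. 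Confirming that this privatization keeps the instance size polynomial and does not inadvertently introduce a cycle (the translated constraints still point from $X$-side roots to $Y$-side roots, and all gadget FDs are acyclic within a tree) is the final thing I would check.
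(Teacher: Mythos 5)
Your core gadget is under-specified in a way that is fatal as written. The single FD $(\{a,b\},\{a,b,c\},0)$ only forces the \emph{upward} direction: together with monotonicity it gives $g(B\cup\{a,b,c\})=g(B\cup\{a,b\})$, but it does not make $c$ determine $a$ and $b$, so the root-equals-set identity $g(B\cup\{r_S\})=g(B\cup S)$ on which your entire argument rests is false. Concretely, take any monotone submodular $f^*$ on the original universe $U$ ignoring all original constraints, and extend it by $g(Z):=f^*(Z\cap U)$. This $g$ is monotone and submodular, satisfies every gadget FD with slack zero (each fresh internal node contributes nothing anywhere), and satisfies every translated constraint $(\{r_{X_i}\},\{r_{X_i},r_{Y_i}\},c_i)$ vacuously, since all roots have zero marginal value under $g$. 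Hence the new instance's bound is unconstrained and the reverse direction of your equivalence collapses. Note that Lemma~\ref{lem:g-properties}, which you cite as your template, relies on \emph{both} FDs $(\{x_i\},\{x_i,y_i\},0)$ and $(\{y_i\},\{x_i,y_i\},0)$; determination must run in both directions, and you pattern-matched only one of them.

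The fix is to add, for each merge node $c$ with children $a,b$, the two downward FDs $(\{c\},\{c,a\},0)$ and $(\{c\},\{c,b\},0)$; these have $|X|=1$, $|Y|=2$, $d=0$, so they conform to the theorem's shape. With them, your induction up the tree goes through exactly as in the paper's Claim~\ref{claim:1-2-equal}: one first shows $g(\{c\})=g(\{c,a,b\})=g(\{a,b\})$ from the three FDs, then lifts to arbitrary $B$ using submodularity for the upper bound and monotonicity for the lower. Once repaired, your construction is essentially the paper's proof reorganized: the paper merges one pair at a time into a fresh variable $0$, adding precisely the three consistency constraints $(\{n-1,n\},\{0,n-1,n\},0)$, $(\{0\},\{0,n-1\},0)$, $(\{0\},\{0,n\},0)$, and performing the replacement in a single difference constraint per iteration (your private-copies-per-constraint device plays the same role, and your termination/size accounting parallels the paper's potential argument). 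Your two transfer directions (collapsing map one way, restriction plus root-equals-set the other) then match the paper's; the remaining differences --- routing the cost through a fresh $(\{r_{X_i}\},\{r_{X_i},r_{Y_i}\},c_i)$ constraint rather than shrinking $(X,Y,d)$ in place, and your worry about acyclicity, which this theorem does not require --- are cosmetic.
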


\paragraph{Reduction:} The high-level idea is to repeatedly replace two variables with a new variable in a difference constraint. We first discuss how to choose two variables to combine. Assume there is a difference constraint $(X, Y, d)$ where $|X| > 2$. Then we combine an arbitrary pair of elements in $X$. If $|X| \leq 2$ for all difference constraints, and there is a difference constraint where $|Y| > 3$, we combine arbitrary two variables in $Y \setminus X$.
If there is a difference constraint $(X, Y, d)$ where $|X| = 2$, $|Y| = 3$ and $d > 0$, we combine the two variables in $X$. It is important to note that we make this replacement in only one difference constraint in each iteration.

By renaming, we can assume wlog that we combine variables $n-1$ and $n$ into a new variable 0 in a difference constraint $(X, Y, d) \in G$. Then,
we create $(X', Y', d)$ and add it to  $G'$ where
$$
    (X', Y', d) :=
    \begin{cases}
        (X \setminus \{n-1, n\} \cup \{0\}, Y \setminus \{n-1, n\} \cup \{0\}, d)  &\mbox{if } \{n-1, n\} \subseteq X, Y \\
        (X, Y \setminus \{n-1, n\} \cup \{0\}, d) &\mbox{if } \{n-1, n\} \subseteq Y \setminus X
    \end{cases}
$$
Further, we add functional dependencies $(\{n-1, n\}, \{0, n-1, n\},0 )$, $(\{0\}, \{0, n-1\},0)$ and $(\{0\}, \{0, n\},0)$ to $DC'$, which we call consistency constraints. Intuitively, consistency constraints imply we have variable $0$ if and only if we have both $n-1$ and $n$. The other constrains are called non-trivial constraints.

\begin{observation}    
    If $X \neq X'$, then we have $X' = X \setminus \{n-1, n\} \cup \{0\} $ and $X = X' \setminus \{0\} \cup \{n-1, n\}.$ A similar observation holds for $Y$ and $Y'$.
\end{observation}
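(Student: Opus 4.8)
The plan is to proceed by a direct case analysis on the definition of $(X', Y', d)$, exploiting the single structural fact that $0$ is a freshly introduced element and hence $0 \notin [n]$, so in particular $0 \notin X$ and $0 \notin Y$.

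First I would observe that $X \neq X'$ can occur only in the first branch of the case definition: in the second branch $X' = X$ verbatim, so there $X = X'$. Hence whenever $X \neq X'$ we must be in the first branch, where by definition $\{n-1, n\} \subseteq X$ and $X' = X \setminus \{n-1, n\} \cup \{0\}$. This is exactly the first claimed identity, so nothing further is needed to establish it.

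For the reverse identity I would compute $X' \setminus \{0\} \cup \{n-1, n\}$ directly. Since $0 \notin X$, we have $X' \setminus \{0\} = (X \setminus \{n-1, n\} \cup \{0\}) \setminus \{0\} = X \setminus \{n-1, n\}$; adding back $\{n-1, n\}$ and using $\{n-1, n\} \subseteq X$ then yields $(X \setminus \{n-1, n\}) \cup \{n-1, n\} = X$, which is precisely $X = X' \setminus \{0\} \cup \{n-1, n\}$.

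Finally, the statement for $Y$ follows by the identical argument: in both branches of the case definition $Y' = Y \setminus \{n-1, n\} \cup \{0\}$, and in both branches $\{n-1, n\} \subseteq Y$ (directly in the first branch, and via $\{n-1, n\} \subseteq Y \setminus X$ in the second), so $Y \neq Y'$ again forces $Y' = Y \setminus \{n-1, n\} \cup \{0\}$ and the same two-line computation gives $Y = Y' \setminus \{0\} \cup \{n-1, n\}$. There is no genuine obstacle here; the only point requiring care is to invoke the freshness of $0$, which guarantees that deleting $\{0\}$ from $X'$ recovers $X \setminus \{n-1, n\}$ cleanly rather than accidentally removing an element already present in the original set.
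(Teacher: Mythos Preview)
Your argument is correct. The paper states this as an \emph{Observation} without any accompanying proof, treating it as immediate from the two-case definition of $(X',Y',d)$; your case analysis and the explicit use of the freshness of $0$ are exactly the right way to spell out what the paper leaves implicit.
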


\paragraph{What do we have after repeatedly applying this reduction?}
Let's first see why the reduction process terminates. For a difference constraint $(X, Y, d)$, Define $c(X, Y, d) = |X| + |Y|$ for a non-trivial constraint $(X, Y, d)$. The potential is defined as the total sum of $c(X, Y, d)$ over all non-trivial constraints.  Observe that in each iteration, either $c(X, Y, d) > c(X', Y', d)$, or $|X'| = 1$ and $|Y'| = 2$.  In the latter case, the resulting non-trivial constraint $(X', Y', d)$ doesn't change in the subsequent iterations. In the former case the potential decreases. Further, initially the potential is at most $2n |G|$ and the number of non-trivial constraints never increases, where $G$ is the set of difference constraints initially given. Therefore, the reduction terminates in a polynomial number of iterations. It is now straightforward to see that we only have difference constraints of the forms that are stated in
Theorem~\ref{thm:2-3-reduction} at the end of the reduction.

\paragraph{Reduction Preserves the Polymatroid Bound.} We consider one iteration where a non-trivial constraint $(X, Y, d)$ is replaced according to the reduction described above. Let $\opt$ and $\opt'$ be the polymatroid bounds before and after performing the iteration respectively. Let $G$ and $G'$ be the sets of the difference constraints before and after the iteration respectively.

We first show $\opt \geq \opt'$. Let $g: \{0\} \cup [n] \rightarrow [0, \infty)$ be a monotone submodular function that achieves $\opt'$ subject to $G'$.  Define $f: [n] \rightarrow [0, \infty)$ such that $f(A)  = g(A)$ for all $A \subseteq [n]$. It is immediate that $f$ is montone and submodular from $g$ being monotone and submodular, as we only restricted the function to $[n]$. The following claim shows that having $0$ is equivalent to having $n-1$ and $n$ in evaluating $g$.

\begin{claim}
    \label{claim:1-2-equal}
	For any $X$, $g(X \cup \{0\} )= g(X \cup \{0, n-1, n\}) = g(X \cup \{n-1, n\})$.
\end{claim}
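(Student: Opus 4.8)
The plan is to establish the two equalities separately, in each case using the now-standard pattern from the proof of Lemma~\ref{lem:g-properties}: a consistency constraint in $G'$ forces a vanishing marginal gain on a \emph{small} set, the diminishing-returns form of submodularity transfers this to any \emph{larger} base set, and monotonicity then pins the marginal down to exactly zero rather than merely nonpositive. Throughout I will use the fact that submodularity is equivalent to the marginal $g(S \cup \{e\}) - g(S)$ being nonincreasing in $S$.

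For the second equality $g(X \cup \{0, n-1, n\}) = g(X \cup \{n-1, n\})$, I would start from the consistency constraint $(\{n-1, n\}, \{0, n-1, n\}, 0)$, which directly gives $g(\{0, n-1, n\}) - g(\{n-1, n\}) \le 0$. Since the marginal gain of adjoining the element $0$ is nonincreasing in the base set, for the larger base $X \cup \{n-1, n\} \supseteq \{n-1, n\}$ we obtain
\[
g(X \cup \{0, n-1, n\}) - g(X \cup \{n-1, n\}) \le g(\{0, n-1, n\}) - g(\{n-1, n\}) \le 0,
\]
and monotonicity forces this marginal to be $\ge 0$, hence exactly zero.

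For the first equality $g(X \cup \{0\}) = g(X \cup \{0, n-1, n\})$, I would use the other two consistency constraints $(\{0\}, \{0, n-1\}, 0)$ and $(\{0\}, \{0, n\}, 0)$. Together with monotonicity these yield $g(\{0, n-1\}) = g(\{0\})$ and $g(\{0, n\}) = g(\{0\})$, so the marginals of adjoining $n-1$ (resp. $n$) to $\{0\}$ both vanish. I would then insert the two elements one at a time into the larger base $X \cup \{0\}$: submodularity applied to the vanishing marginal of $n$ gives $g(X \cup \{0, n\}) = g(X \cup \{0\})$, and submodularity applied to the vanishing marginal of $n-1$ over the base $X \cup \{0, n\} \supseteq \{0\}$ gives $g(X \cup \{0, n-1, n\}) = g(X \cup \{0, n\})$. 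Chaining these produces $g(X \cup \{0, n-1, n\}) = g(X \cup \{0\})$, which together with the second equality completes the claim.

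I do not expect a genuine obstacle here; the argument is routine and essentially mirrors Lemma~\ref{lem:g-properties}. The only point meriting mild care is the first equality, where two elements must be adjoined and one must check that the second submodularity step still uses a base set containing $\{0\}$ so that the vanishing-marginal inequality anchored at $\{0\}$ remains applicable; this is immediate since $X \cup \{0, n\} \supseteq \{0\}$.
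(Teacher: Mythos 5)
Your proof is correct and takes essentially the same route as the paper's: vanishing marginals forced by the consistency constraints on small sets, transferred to larger base sets via the diminishing-returns form of submodularity, and pinned to zero by monotonicity. The only cosmetic difference is that you adjoin $n$ and $n-1$ one at a time, whereas the paper first establishes $g(\{0\}) = g(\{0, n-1, n\})$ and then transfers the pair $\{n-1, n\}$ in a single submodularity step.
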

\begin{proof}
Due to the consistency constraints and $g$'s monotonicity, we have $g(\{0, n-1, n\}) = g(\{n-1, n\})$. Because of the consistency constraints we added and $g$'s submodularity, we have $0 \geq g(\{0, n-1\})  - g(\{0\}) \geq g(\{0, n-1, n \})  - g(\{0, n\})$.
Then due to the monotonicity, we have $g(\{0\})  = g(\{0, n-1\})$ and $g(\{0, n-1, n \}) = g(\{0, n\})$. Similarly, we can show that $g(\{0\})  = g(\{0, n\})$ and $g(\{0, n-1, n \}) = g(\{0, n-1\})$. Thus, we have shown that $g(\{0\}) = g(\{0, n-1, n\})$.

The first equality in the claim follows since
$0 = g(\{0, n-1, n\}) - g(\{0\}) \geq g(X \cup \{0, n-1, n\}) - g(X \cup \{0\}) \geq 0$. The second equality can be shown similarly.   \qed
\end{proof}

We now check if $f$ satisfies $G$. Because we only replaced
$(X, Y, d) \in G$, we only need to show that $f$ satisfies it.
 We need to consider two case:
\begin{itemize}
	\item When $\{n-1, n\}  \subseteq X \subseteq Y$. Then, we have $g( Y \cup \{0\} \setminus \{n-1, n\}) - g(X \cup \{0\} \setminus \{n-1, n\}) \leq d$. By Claim~\ref{claim:1-2-equal}, we have $g(Y) - g(X) = g( Y \cup \{n-1, n\}) - g(X \cup \{n-1, n\}) \leq d$. By definition of $f$, we have $f(Y) - f(X) \leq d$.
	\item When $\{n-1, n\}  \subseteq Y \setminus X$. In this case, $X' = X$ and $Y' = Y \cup \{0\} \setminus \{n-1, n\}$; thus we have $g(Y \cup \{0\} \setminus \{n-1, n\}) - g(X) \leq d$. Thanks to Claim~\ref{claim:1-2-equal} and $f$'s definition, we have $f(Y) - f(X) \leq d$, as desired.
\end{itemize}

Finally, $f([n]) = g([n]) = g([n] \cup \{0\}) = \opt'$ due to Claim~\ref{claim:1-2-equal}. Since we have shown $f$ is a feasible solution for $G$, we have $\opt \geq f([n])$. Thus, we have $\opt \geq \opt'$ as desired.

\medskip
We now show $\opt \leq \opt'$. Given $f$ that achieves $\opt$ subject to $G$, we construct $g: \{0\} \cup [n] \rightarrow [0, \infty)$ as follows:
\begin{equation}
	g(A) :=
	\begin{cases}
		f(A) & \mbox{if } 0 \not \in A \\
		f(A \setminus \{0\} \cup \{n-1, n\}) & \textnormal{otherwise}
 	\end{cases}
\end{equation}

We first verify that $g$ is monotone. Consider $A \subseteq B \subseteq \{0\} \cup [n]$.
If $0 \not \in A$ and $0 \not \in B$, or  $0 \in A$ and $0 \in B$, it is easy to see that is the case. So, assume $0 \not \in A$ but $0 \in B$.
By definition of $g$, it suffices show $f(A) \leq f(B \setminus \{0\} \cup \{n-1, n\})$, which follows from $f$'s monotonicity: Since $0 \not \in A$ and $A \subseteq B$, we have $A \subseteq B \setminus \{0\} \cup \{n-1, n\}$.

Secondly we show that $g$ is submodular. So, we want to show that $g(A) + g(B) \geq g(A \cup B) + g(A \cap B)$ for all $A, B \subseteq \{0\} \cup [n]$.

\begin{itemize}
	\item When $0 \not \in A$ and $0 \not \in B$. This case is trivial as $g$ will have the same value as $f$ for all subsets we're considering.
	\item When $0 \in A$ and $0 \in B$. We need to check if $f(A \setminus \{0\} \cup \{n-1, n\}) + f(B \setminus \{0\} \cup \{n-1, n\})  \geq
	f(A \cup B \setminus \{0\} \cup \{n-1, n\}) + f(A \cap B \setminus \{0\} \cup \{n-1, n\})$, which follows from $f$'s submodularity. More concretely,  we  set $A' = A \setminus \{0\} \cup \{n-1, n\}$ and $B' = B \setminus \{0\} \cup \{n-1, n\}$ and use $f(A') + f(B') \geq f(A' \cup B') + f(A' \cap B')$.
	\item When $0 \in A$ and $0 \not \in B$ (this is symmetric to $0 \not \in A$ and $0 \in B$).  We need to check if
	$$f(A \setminus \{0\} \cup \{n-1, n\}) + f(B)  \geq 	f(A \cup B \setminus \{0\} \cup \{n-1, n\}) + f(A \cap B ).$$  For $A' = A \setminus \{0\} \cup \{n-1, n\}$, we have $f(A' ) +f(B) \geq f(A' \cup B) + f(A' \cap B)$.

	So, it suffices to show
	$$	f(A' \cup B) + f(A' \cap B) \geq f(A \cup B \setminus \{0\} \cup \{n-1, n\}) + f(A \cap B )$$

Because $A' \cup B = A \cup B \setminus \{0\} \cup \{n-1, n\}$, this is equivalent to showing:
\begin{align*}
	& \ f(A' \cap B) \geq  f(A \cap B ) \\
\Leftrightarrow  &	\ f(A' \cap (B \setminus \{0\})) \geq  f((A \setminus \{0\}) \cap  (B \setminus \{0\})) \\
\Leftarrow &  \ A' \supseteq (A \setminus \{0\})  \quad \quad  \mbox{[Due to $f$'s monotonicity]}
\end{align*}
\end{itemize}

Thirdly, we show that $g$ satisfies $G'$. Suppose we replaced a non-trivial constraint $(X, Y, d)$ with $(X', Y', d)$.
 We show $g(Y') - g(X') \leq d$ by showing $f(Y) = g(Y')$ and $f(X) = g(X')$. Both cases are symmetric, so we only show $f(X') = g(X)$. If $0 \not \in X'$, then clearly we have $g(X') = f(X)$ since $X' = X$. If $0 \in X'$, then it must be the case that $X'  = X \setminus \{n-1, n\} \cup \{0\}$. By definition of $g$, we have $g(X') = f(X'  \setminus \{0\} \cup \{n-1, n\})  = f(X)$ since $X'  \setminus \{0\} \cup \{n-1, n\} = X$.

Now we also need to check $g$ satisfies the consistency constraints we created. So we show
\begin{itemize}
	\item $g(\{0, n-1, n\}) \leq g(\{0\})$. Note $g(\{0, n-1, n\}) = f(\{n-1, n\}) = g(\{0\})$ by definition of $g$. Due to $g$'s monotonicity we have already shown, we have $g(\{0, n-1\}) \leq g(\{0\})$ and $g(\{0, n\}) \leq g(\{0\})$.
	\item $g(\{0, n-1, n\}) \leq g(\{n-1, n\})$. Both sides are equal to $ f(\{n-1, n\})$ by definition of $g$.
\end{itemize}

Finally, we have $g(\{0\} \cup [n]) = f([n])$. Since $g$ is a monotone submodular function satisfying $G$, we have $\opt' \geq \opt$ as desired.

This completes the proof of Theorem~\ref{thm:2-3-reduction}.

\bibliographystyle{alpha} 
\bibliography{ipco2023}

\end{document}